\documentclass[conference]{IEEEtran}
\textheight 244 truemm
%\textheight 241 truemm
% "o

%Load packages
\usepackage{cite} % combines consecutive citations
\usepackage{units}

\usepackage{amsmath} % to use some math commands
\usepackage{amssymb,amsthm} % Math symbols6
\usepackage[subnum]{cases}
\usepackage{ctable}
%The cmex10 option is used to ensure that only type 1 fonts will utilized at all point sizes. Without this option,
% it is possible that some math symbols, particularly those within footnotes, will be rendered in bitmap form which will result in a
% document that can not be IEEE Xplore compliant!
\usepackage{graphicx,graphics,color}
\graphicspath{{figures/}} 

\usepackage[caption=false,font=footnotesize]{subfig}

%%% Commands

\newcommand{\Pul}{P_{\text{UL}}}
\newcommand{\Pc}{P_{\text{C}}}
\newcommand{\hdl}{h_{\text{DL}}}
\newcommand{\hul}{h_{\text{UL}}}
\newcommand{\Omegaul}{\Omega_{\text{UL}}}
\newcommand{\Omegadl}{\Omega_{\text{DL}}}
\newcommand{\Pdl}{P_{\text{DL}}}

\newcommand{\dd}{\text{d}}
\newcommand{\E}{\mathbb{E}}
\newcommand{\e}{{\rm e}}
\newcommand{\req}{\overset{\underset{!}{}}{=}}

% Even bigger sizes than plain provides. 

 %

%\newcommand{\sup}{\operatornamewithlimits{sup}}

%%% Definitions
%\def\@endtheorem{\endtrivlist}% NEW
%\makeatother
%\newtheoremstyle{mystyle}% hnamei
%{0.5em}% hSpace abovei
%{.5em}% hSpace belowi
%{\normalfont}% hBody fonti
%{0}% hIndent amounti
%{\bfseries}% hTheorem head fonti
%{:}% hPunctuation after theorem headi
%{0.5pt}% hSpace after theorem headi
%%{}% hTheorem head spec (can be left empty, meaning ‘normal’)i

\theoremstyle{plain}
\newtheorem{corollary}{Corollary}
\newtheorem{theorem}{Theorem}

\newtheorem{proposition}{Proposition}

\theoremstyle{definition}

\theoremstyle{remark}
\newtheorem{remark}{Remark}

\begin{document}
% paper title can use linebreaks \\ within to get better formatting as desired % Do not put math or special symbols in the title.
%\title{Performance Analysis of a Wireless Powered Sensor with Fixed Transmit Power}
%\title{Performance Analysis of a Wireless Powered Communication with Finite Energy Storage} %% under fading
\title{Performance Analysis of Wireless Powered Communication with Finite/Infinite Energy Storage} %% under fading
%\title{Performance Analysis of a Wireless Powered Communication link with fading} %% under fading

% Keywords Energy harvesting; online transmission policy; storage theory; Finite storage; Markov chains on a general state space; Wireless Powered Communication; RF energy harvesting; sensor networks; limiting distribution; integral equation; Moran's dam model 
\author{\IEEEauthorblockN{Rania Morsi, Diomidis S. Michalopoulos, and Robert Schober}
\IEEEauthorblockA{Institute of Digital Communications, Friedrich-Alexander-University Erlangen-N\"urnberg (FAU), Germany}
}

% use for special paper notices
%\IEEEspecialpapernotice{(Invited Paper)}

% make the title area
\maketitle

% As a general rule, do not put math, special symbols or citations in the abstract, no keywords
\begin{abstract}
In this paper, we consider an energy harvesting (EH) node which harvests energy from a radio frequency (RF) signal broadcasted by an access point (AP) in the downlink (DL). The node stores the harvested energy in an energy buffer and uses the stored energy to transmit data to the AP in the uplink (UL). We consider a simple transmission policy, which accounts for the fact that in practice the EH node may not have knowledge of the EH profile nor of the UL channel state information. In particular, in each time slot, the EH node transmits with either a constant desired power or a lower power if not enough energy is available in its energy buffer. For this simple policy, we use the theory of discrete-time continuous-state Markov chains to analyze the limiting distribution of the stored energy for finite- and infinite-size energy buffers. Moreover, we take into account imperfections of the energy buffer and the circuit power consumption of the EH node. For a Rayleigh fading DL channel, we provide the limiting distribution of the energy buffer content in closed form. In addition, we analyze the average error rate and the outage probability of a Rayleigh faded UL channel and show that the diversity order is not affected by the finite capacity of the energy buffer.  Our results reveal that the optimal desired transmit power by the EH node is always less than the average harvested power and increases with the capacity of the energy buffer.
\end{abstract}

%/energy constrained
%power-limited
%%%%%%%%%%%%%%%%%%%%%%%%%%%%%%%%%%%%%%%%%%%%%%%%%%%%%%%%%%%%%%%%%%%%%%%%%%%%%%%%%%%%%%%%%%%%%%%%%%%%%%%%%%%%%%%%%%
\section{Introduction}
%Sustainable operation of battery-powered wireless communication networks demands the need for renewable energy sources to address the lifetime bottleneck problem. 
%Harvesting energy from natural (solar, wind, thermal) or man-made (piezo-electric, vibration) energy sources presents a promising solution for meeting the power demands of such networks.
The performance of battery-powered wireless communication networks, such as sensor networks, is limited by the lifetime of the network nodes. Periodic replacement of the nodes' batteries is costly, inconvenient, and sometimes impossible when the sensor nodes are placed in a hazardous environment or embedded inside the human body. The lifetime bottleneck problem of energy-constrained wireless networks thus demands harvesting energy from renewable energy sources (e.g., solar, wind, thermal, vibration) to ensure a sustainable network operation. The harvested energy can then be used by the energy harvesting (EH) node to transmit data to its designated receiver. However, the aforementioned energy sources are in general intermittent and uncontrollable. For example, solar and wind energy are weather dependent and are not available indoors. In contrast, radio frequency (RF) energy is a viable energy source which is partially controllable and can be provided on demand to charge low-power devices \cite{Kansal_2007}. %Therefore, when a sufficient number of nodes run out of energy, the network may fail to perform its designated task. %However, RF energy is charachterized by low power density \cite{}, thus it targets low-power devices like sensor nodes. 
%The lifetime bottleneck problem of energy-constraint wireless networks thus demands the use of renewable energy sources to ensure sustainable network operation. 
%Harvesting energy from the environment (e.g., solar, wind, thermal, vibration) presents a promising solution for meeting the power demands of such networks. 
%Therefore, nodes with an exhausted battery are not replaced which in turn reduces the network size so the network may fail to perform its designated task.
%On the other hand, RF energy is a viable energy source which is partially controllable, weather-independent and can be provided on demand \cite{Kansal_2007}. 

% Literature review on EH sensor networks
 %Another source of randomness is the time-varying fading of the communication link. 
A common feature of EH communication networks is the randomness of the amount of harvested energy. %energy harvested by the EH node.
For instance, solar/wind energy varies throughout the day and the harvested energy from an RF signal varies due to time-varying fading. Furthermore, the information signal transmitted by the EH node encounters also time-varying fading, which introduces another source of randomness. Therefore, one main objective of energy management polices for EH networks is to match the energy consumption profile of the EH node to the random energy generation profile of the EH source and to the random information channel \cite{Kansal_2007,WPC_TDMA,WPC_SDMA,Sharma2010,Ulukus_Yener_2011}. For example, the authors of \cite{Kansal_2007} introduced the concept of energy neutral operation of an EH system, where the energy used by the system is always less than the energy harvested. Energy neutrality is thus a condition for an EH system to operate perpetually. 
In \cite{WPC_TDMA,WPC_SDMA}, a harvest-then-transmit protocol is considered for a multiuser system with RF wireless power transfer (WPT) in the downlink (DL) and wireless information transfer (WIT) in the uplink (UL), where the users' sum rate or equal throughput is maximized on a per-slot basis. In \cite{Sharma2010}, throughput and mean delay optimal energy neutral policies, which stabilize the data queue of an EH sensor node over an infinite horizon, are proposed  in a time-slotted setting. In \cite{Ulukus_Yener_2011}, optimal transmission policies that maximize the throughput by a deadline or minimize the transmission completion time are proposed for an EH node with finite energy storage in a continuous time setting. %The model used in \cite{WPC_TDMA,WPC_SDMA} assumes that the RF harvested energy in a time slot is used only in the same slot. %Optimal power allocation solutions that maximize the throughput and minimize the outage probability over a finite horizon are reported in \cite{RuiZhang2012} and \cite{Outage_RuiZhang_2012}, respectively, for an EH node with infinite energy storage in a time-slotted setting. 
%Therefore, the use of an energy buffer, such as a rechargeable battery and/or a supercapacitor \cite{Culler_2005}, allows storing excess energy for future use and thus reduces the probability of energy shortage. 
% which provides another source of randomness for EH communication networks. %perennially=perpetually
%For instance, the solar/wind energy availability varies throughout the day. Also, the harvested energy from an RF transmitter varies due to the time-varying fading nature of the wireless channel.%Furthermore, the transmitted information signal from the EH node to its designated receiver encounters time-varying fading, which provides another source of randomness for EH communication networks

Optimal offline transmission policies typically require non-causal knowledge of energy and channel state information (CSI) at the EH node, whereas optimal online solutions are typically based on dynamic programming which is computationally intensive even for a small number of transmitted symbols, see  \cite{Ulukus_Yener_2011} and the references therein. Therefore, these optimal policies may not be feasible in practice. For example, typical EH wireless sensor networks are expected to comprise many small, inexpensive sensors with limited computational power and energy storage. In such networks, even causal CSI may not be available at the EH nodes nor at the EH source.

Motivated by these practical considerations, in this paper, we consider a simple online transmission policy, where the CSI and the EH profile are not available at the EH node nor at the EH source. In particular, an access point (AP) transmits an RF signal with a constant power in the DL and the EH node harvests the received RF energy and uses the stored energy to transmit data to the AP in the UL. In each time slot, the EH node transmits with either a constant desired power or a lower power if not enough energy is available in its energy buffer. We model the stored energy by a discrete-time continuous-state Markov chain and provide its limiting distribution for both infinite and finite energy storage, when the DL channel is Rayleigh fading. Under this framework, we analyze the average error rate (AER) and the outage probability of a Rayleigh fading information link. We show that, surprisingly, the diversity order is not affected if the energy storage has finite capacity. Furthermore, we show that the optimal desired UL power of the considered policy is always less than the average harvested power and increases with the capacity of the energy buffer. The proposed framework also takes into account the system non-idealities such as non-zero circuit power consumption and imperfections of the energy buffer. 

The rest of the paper is organized as follows. Section \ref{s:System_model} presents the overall system model. In Sections \ref{s:Infinite_buffer} and \ref{s:Finite_buffer}, we study the limiting distribution of the stored energy for infinite- and finite-capacity energy buffers, respectively. In Section \ref{s:BER_outage_analysis}, we analyze the AER and the outage probability of the communication link, when both UL and DL channels are Rayleigh faded. Numerical and simulation results are provided in Section \ref{s:Simulations}. Finally, Section \ref{s:conclusion} concludes the paper. %Section \ref{s:System_model} presents the communication, EH, and storage models as well as the considered system imperfections.  and provide them in closed-form for a Rayleigh fading DL channel.%%%...%%presents the overall system model.
\section{System Model}
\label{s:System_model}
%In this section, we present
We consider a time-slotted point-to-point single-antenna EH system with DL WPT and UL WIT. In particular, the system consists of a node with an EH module which captures the RF energy transferred by an AP in the DL and uses the harvested energy to transmit its backlogged data in the UL. The considered system employs frequency-division-duplex, where WPT and WIT take place concurrently on two different frequency bands. The AP and the EH node are assumed to have no instantaneous knowledge of the DL and the UL CSI, respectively, nor of the amount of harvested energy. Next, we describe the communication, EH, and storage models as well as the considered system imperfections.%(FDD)
\subsection{Communication Model}%Only statistical knowledge of both UL and DL channels is assumed available at the EH node. 
In time slot $i$ (defined as the time interval $[i,i+1)$\footnote{The time slot is assumed to be of unit length. Hence, we use the terms energy and power interchangeably.}), the EH node transmits data to the AP with an UL power given by
\begin{equation}
\Pul(i)=\min(B(i),M),
\label{eq:Pul_policy}
\end{equation}
where $B(i)$ is the residual stored energy at the beginning of time slot $i$ and $M$ is the desired constant UL transmit power. The transmitted signal encounters a flat block fading channel, i.e., the channel remains constant over one time slot, and changes independently from one slot to the next. The channel power gain sequence $\{\hul(i)\}$ is a stationary and ergodic process with mean $\Omegaul=\E[\hul(i)]$, where $\E[\cdot]$ denotes expectation. Additive white Gaussian noise (AWGN) with variance $\sigma_n^2$ impairs the received signal at the AP.
\subsection{EH Model}
During the same time slot, the EH node collects $X(i)$ units of RF energy broadcasted by the AP and stores it in its energy buffer.  We assume that the energy replenished in a time slot may only be used in future time slots. The DL channel is also assumed to be flat block fading with  a stationary and ergodic channel power gain sequence $\{\hdl(i)\}$,  assumed to be unknown at the AP, where $\Omegadl=\E[\hdl(i)]$. We adopt the EH receiver model in \cite{WIPT_Architecture_Rui_Zhang_2012}, where the harvested energy in time slot $i$ is given by $X(i)=\eta\Pdl\hdl(i)$, where $0<\eta< 1$ is the RF-to-DC conversion efficiency of the EH module and $\Pdl$ is the constant DL transmit power from the AP. The energy replenishment sequence $\{X(i)\}$ is consequently an independent and identically distributed (i.i.d.) stationary and ergodic process with mean $\bar{X}=\eta\Pdl\Omegadl$, probability density function (pdf) $f(x)$, and complementary cumulative distribution function (ccdf) $\bar{F}(x)=\mathbb{P}(X(i)>x)$, where $\mathbb{P}(\cdot)$ denotes the probability of an event.%\footnote{UL WIT and DL WPT occupy the whole time slot in frequency division duplex (FDD) systems. In time division duplex (TDD) systems, UL WIT occupy the \emph{first} half of the time slot followed by DL WPT. With this setup, our energy storage model is general for both FDD and TDD systems.}
\subsection{Storage Model}
\label{ss:storage_model}
The harvested energy $X(i)$ is stored in an energy buffer, such as a rechargeable battery and/or a supercapacitor \cite{Culler_2005}, with a finite storage capacity of $K$. The dynamics of the storage process $\{B(i)\}$ are given by the storage equation
\begin{equation}
\begin{aligned}
B(i+1)&=\min\left(B(i)-\Pul(i)+X(i),K\right)\\
&=\min\left([B(i)-M]^++X(i),K\right) &
\end{aligned},
\label{eq:general_storage_equation}
\end{equation}
where $[x]^+=\max(x,0)$. The storage process $\{B(i)\}$  in (\ref{eq:general_storage_equation}) is a discrete-time Markov chain on a continuous state space $S$, where  $S=[0,K]$ for a finite-size energy buffer and $S=[0,\infty)$ for an infinite-size buffer. \begin{remark}
Interestingly, our storage model is similar to the dam model proposed by Moran in \cite{Moran_1956}. In Moran's model, every year $X(i)$ units of water flow into a dam of capacity $K$ and a constant amount of water $M$ is released just before the following year. Moran studies the amount of water $\{Z(i)\}$ stored in the dam just after release, which is modeled by the storage equation $Z(i+1)= [\min(Z(i)+X(i),K)-M]^+$, which for an infinite-capacity dam (i.e., $K\to\infty$) reduces to\vspace{-0.1cm} 
%Although Moran's storage equation is different from (\ref{eq:general_storage_equation}), the stationary distribution of the stored energy $\{B(i)\}$ can be partially deduced from that of $\{Z(i)\}$.
\begin{equation} 
Z(i+1)=\begin{cases} 0 & Z(i)+X(i) \leq M \\ Z(i)+X(i)-M & Z(i)+X(i)> M \end{cases}.
\label{eq:Moran_buffer_content}
\vspace{-0.1cm} 
\end{equation} 
The stationary distribution\footnote{A stationary distribution of a Markov chain is a distribution such that if the chain starts with this distribution, it remains in that distribution.} of $\{Z(i)\}$ (if it exists) can be obtained e.g. by first defining a new process $U(i)=Z(i)+X(i)$. Hence, if we add $X(i+1)$ to $Z(i+1)$, we get \vspace{-0.1cm}
\begin{equation}
U(i+1)=\begin{cases} X(i+1) & U(i) \leq M \\ U(i)-M+X(i+1) & U(i)> M \end{cases},
\label{eq:U_i_process}
\vspace{-0.1cm} 
\end{equation} 
which is identical in distribution to $\{B(i)\}$ in (\ref{eq:general_storage_equation}) at $K\to\infty$. Hence, the distribution of $\{U(i)\}$ in Moran's dam model is identical to the distribution of $\{B(i)\}$ in our energy buffer model. Similarly, for a finite storage capacity, $\{B(i)\}$ is equivalent to $\{\min(U(i),K)\}$.
\label{remark:equivalence_to_Morans_Model}
\end{remark}%Although Moran's storage equation is different from (\ref{eq:general_storage_equation}), the stationary distribution\footnote{A stationary distribution of a Markov chain is a distribution such that if the chain starts with this distribution, it remains in that distribution.} of the stored energy $\{B(i)\}$ can be partially deduced from that of $\{Z(i)\}$.
%In particular, for an infinite-capacity dam (i.e., $K\to\infty$), Moran's storage equation reduces to \vspace{-0.2cm}
%%%%%%%%%%%%%%%%%%%%%%%%%%%%%%%%%%%%%%%%%%%%%%%%%%%%%%%%%%%%%%%%%%%%%%%%%%%%%%%%%%%%%%%%%%%%%%%%%%%%%%%%%%%%%%%%%
%\section{Considerations of Circuit Power Consumption, Power Amplifier Inefficiency and Storage Inefficiency}
\subsection{Consideration of Imperfections}
\label{s:imperfections}
We consider imperfections due to the circuit power consumption of the EH node and the non-idealities of the energy buffer. In particular, we consider the following imperfections; (a) For the power amplifier of the EH node to transmit an RF power of $\Pul$, it consumes a total power of $\alpha\Pul$, where $\alpha> 1$ is the power amplifier inefficiency. (b) We assume that the EH node circuitry consumes a constant power of $P_{{\rm ct}}$ used mainly for harvesting, processing, and sensing (for EH sensors). (c) Two main imperfections of the energy buffer are considered \cite{Kansal_2007}. First, the buffer is assumed to leak a constant amount of energy in each time slot, denoted by $P_{l}$. Second, we consider the buffer storage inefficiency characterized by $0<\beta<1$, where if $X$ amount of energy is applied at the input of the buffer, only an amount of $\beta X$ may be stored. Compared to rechargeable batteries, supercapacitors have high storage efficiency $\beta$, but also high leakage current  \cite{Culler_2005},\cite{Kansal_2007}.

Define $\Pc$ as the total constant energy usage in each time slot, i.e.,  $\Pc=P_{\rm ct}+P_l$. In this case, the energy buffer dynamics are described by $B(i\!+\!1)\!=\!\min\left(B(i)-(\Pc+\alpha\Pul(i))\!+\!\beta X(i),K\right)$, where the desired UL transmit power is $M$. If $B(i)<\Pc+\alpha M$, then the UL power is reduced to satisfy $B(i)=\Pc+\alpha\Pul(i)$, i.e., $\Pul(i)=(B(i)-\Pc)/\alpha$ which ensures energy neutral operation\footnote{Another option to ensure energy neutral operation is to reduce the circuit power consumption using dynamic voltage scaling or duty cycling \cite{Kansal_2007}.}. Hence, the storage equation reduces to
\begin{equation}
B(i+1)=\min\left([B(i)-(\Pc+\alpha M)]^++\beta X(i),K\right).
\label{eq:storage_equation_imperfection2}
\end{equation}
Observe that (\ref{eq:storage_equation_imperfection2}) is identical to (\ref{eq:general_storage_equation}) after replacing $M$ by $\tilde{M}\!=\!\Pc+\alpha M$ and $f(x)$ by $\tilde{f}(x)\!=\!\frac{1}{\beta}f\left(\frac{x}{\beta}\right)$. Thus, in the following, we perform the analysis for an ideal system (i.e., $\alpha\!=\!1$, $\beta\!=\!1$, and $\Pc\!=\!0$). For a non-ideal system, all the results in Sections \ref{s:Infinite_buffer}-\ref{s:BER_outage_analysis} hold with the aforementioned substitutions.%, all the results in  apply.% and $\widetilde{\bar{\gamma}}=\frac{\Omegaul \widetilde{\bar{X}}}{\sigma_n^2}$. That is, for the exponentially distributed EH process, $\delta$ is replaced by $\tilde{\delta}=\tilde{\lambda}\tilde{M}$, where $\tilde{\lambda}=\frac{1}{\widetilde{\bar{X}}}=\frac{\lambda}{\beta}$.  i.e. $\widetilde{\bar{X}}=\beta \bar{X}$
\section{Infinite-Capacity Energy Buffer}
\label{s:Infinite_buffer}
In this section, we study the energy storage process in (\ref{eq:general_storage_equation}) for an infinite-capacity energy buffer. We provide conditions for which the convergence to a limiting distribution\footnote{A limiting distribution of a Markov chain is a stationary distribution that the chain converges asymptotically to from some initial distribution.} of the buffer content is either guaranteed or violated. Furthermore, we provide the limiting distribution of the buffer content in closed form when the EH process $\{X(i)\}$ is i.i.d. exponentially distributed,  i.e., for a Rayleigh block fading DL channel.%, i.e., $K\to\infty$
\begin{theorem}\normalfont
For the storage process $\{B(i)\}$ in (\ref{eq:general_storage_equation}) with infinite buffer size, if $M<\bar{X}$, then $\{B(i)\}$ does not possess a stationary distribution. Furthermore, after a finite number of time slots, $\Pul(i)=M$ holds almost surely (a.s).
\label{theo:no_stationary_dist}
\end{theorem}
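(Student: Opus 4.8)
The plan is to show that, under the hypothesis $M<\bar{X}$, the stored energy diverges, $B(i)\to+\infty$ almost surely, and then to extract both assertions of the theorem from this single fact. For an infinite buffer the cap $\min(\cdot,K)$ in (\ref{eq:general_storage_equation}) disappears as $K\to\infty$, so the recursion becomes $B(i+1)=[B(i)-M]^++X(i)$. The crucial observation is that discarding the positive-part operator can only lower the value, i.e. $[B(i)-M]^+\ge B(i)-M$, which yields the one-sided bound $B(i+1)\ge B(i)-M+X(i)$ valid for every $i$ and every realization. This replaces the reflected dynamics by a plain random walk that lies below the true process.

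The next step is to telescope this inequality. Summing over $i=0,\dots,n-1$ gives $B(n)\ge B(0)+\sum_{i=0}^{n-1}\bigl(X(i)-M\bigr)$. Since $\{X(i)\}$ is i.i.d. with finite mean $\bar{X}$, the strong law of large numbers yields $\frac1n\sum_{i=0}^{n-1}\bigl(X(i)-M\bigr)\to\bar{X}-M$ almost surely. The assumption $M<\bar{X}$ makes this limit strictly positive, so the partial sums grow linearly and tend to $+\infty$; as $B(0)$ is finite a.s., it follows that $B(n)\to+\infty$ almost surely, and this holds uniformly for any deterministic or random initial value $B(0)$.

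Both conclusions then follow directly. For the second, almost every sample path exceeds the level $M$ from some finite (path-dependent) slot onward, so $B(i)\ge M$ and hence $\Pul(i)=\min(B(i),M)=M$ for all sufficiently large $i$, a.s. For the first, I would argue by contradiction: if a stationary distribution $\pi$ existed and $B(0)\sim\pi$, then stationarity would force $\mathbb{P}(B(n)>L)=\pi\bigl((L,\infty)\bigr)$ to be constant in $n$ for every level $L$; but $B(n)\to\infty$ a.s. forces $\mathbb{P}(B(n)>L)\to1$, so $\pi\bigl((L,\infty)\bigr)=1$ for all $L$, which is impossible for a proper probability measure on $[0,\infty)$ since its tail must vanish as $L\to\infty$.

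I expect the only genuinely delicate points to be conceptual rather than computational: confirming that the positive-part reflection can never help the chain stay bounded (so the random-walk lower bound is legitimate), and verifying that the divergence holds uniformly over the initial law so that the tail-probability contradiction applies to an \emph{arbitrary} candidate $\pi$. Once these are secured, the SLLN step and the contradiction argument are routine.
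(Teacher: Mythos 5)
Your proof is correct, and it takes a genuinely different route from the paper's. The paper argues in expectation: taking means in the storage equation gives $\E[B(i+1)]-\E[B(i)]=\bar{X}-\E[\Pul(i)]\geq\bar{X}-M>0$, so the mean drifts upward without bound, from which the non-existence of a stationary law and the eventual saturation $\Pul(i)=M$ are inferred. You instead work pathwise: dropping the reflection via $[B(i)-M]^+\geq B(i)-M$ bounds $\{B(i)\}$ below by an unreflected random walk, the SLLN forces that walk (and hence $B(n)$) to $+\infty$ almost surely, and the two conclusions follow directly --- eventual exceedance of level $M$ on almost every path gives the saturation claim, and the tail-probability contradiction ($\mathbb{P}(B(n)>L)\to1$ for every $L$ versus $\pi((L,\infty))$ being constant in $n$ and vanishing as $L\to\infty$) rules out any stationary $\pi$. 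Your version is arguably tighter on two points where the paper is informal: the mean-drift argument as written only excludes stationary distributions with finite mean (an infinite-mean $\pi$ makes $\E[B(i+1)]>\E[B(i)]$ vacuous), and divergence of $\E[B(i)]$ does not by itself yield the almost-sure statement about $\Pul(i)$; your pathwise bound handles both cleanly. What the paper's approach buys is brevity and the reuse of the same expectation identity that motivates the complementary condition $M>\bar{X}$ in Theorem~\ref{theo:stationary_dist_infinite}. One small caveat in your write-up: for the contradiction you should note that under the candidate stationary law, $B(0)$ is independent of the driving sequence $\{X(i)\}_{i\geq0}$ and finite a.s., so the almost-sure divergence indeed holds under that initial law; you flag this as a point to verify, and it is immediate in the standard Markov-chain construction.
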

\begin{proof} The proof is provided in Appendix \ref{app:no_stationary_dist}. \end{proof}
%\begin{lemma}\normalfont
%For the storage process $\{B(i)\}$ in (\ref{eq:general_storage_equation}) with infinite buffer size, if $\{B(i)\}$ possesses a stationary distribution, then $\E[\Pul(i)]=\bar{X}$.
%\label{lemma:avg_Pul_Xbar}
%\end{lemma}
%\begin{proof} The proof is provided in Appendix \ref{app:avg_Pul_Xbar} \end{proof}
\begin{theorem}\normalfont
For the storage process $\{B(i)\}$ in (\ref{eq:general_storage_equation}) with infinite buffer size, if $M>\bar{X}$, then $\{B(i)\}$ is a stationary and ergodic process which possesses a unique stationary distribution $\pi$ that is absolutely continuous on $(0,\infty)$. Furthermore, the process converges in total variation to the limiting distribution $\pi$ from any initial distribution. 
\label{theo:stationary_dist_infinite}
\end{theorem}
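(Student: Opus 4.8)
The plan is to treat $\{B(i)\}$ as an aperiodic, positive Harris-recurrent Markov chain on $[0,\infty)$ and to invoke the ergodic theory for such chains (in the sense of Meyn and Tweedie). Letting $K\to\infty$ in (\ref{eq:general_storage_equation}) removes the truncation, leaving the recursion $B(i+1)=[B(i)-M]^+ + X(i)$, so the one-step transition kernel is $P(b,\cdot)=\mathbb{P}\bigl([b-M]^+ + X\in\cdot\bigr)$. Since $X$ has density $f$ supported on $(0,\infty)$, this kernel has an absolutely continuous component: conditioned on $B(i)=b$, the next state has density $y\mapsto f\bigl(y-[b-M]^+\bigr)$ on $\bigl([b-M]^+,\infty\bigr)$, and no atoms.

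The engine of the proof is a Foster--Lyapunov drift condition with the linear test function $V(b)=b$. From the recursion, $\E[V(B(i+1))\mid B(i)=b]=[b-M]^+ + \bar{X}$. Hence for $b\ge M$ the one-step mean drift equals $\bar{X}-M$, a strictly negative constant precisely because $M>\bar{X}$, while for $b$ in the bounded set $[0,M]$ the drift is bounded above by $\bar{X}$. This is exactly Foster's criterion with the compact set $C=[0,M]$ playing the role of the test set and $\varepsilon=M-\bar{X}>0$ the uniform negative drift, which yields positive Harris recurrence together with a finite-mean bound $\int V\,\dd\pi<\infty$ on the invariant law.

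Next I would verify the topological regularity needed to turn recurrence into the stated ergodic conclusions. Using the density $f$, I would show that a fixed reference interval in $(0,\infty)$ is reached in finitely many steps with positive probability from every state, so the chain is $\psi$-irreducible with $\psi$ equivalent to Lebesgue measure on $(0,\infty)$; the continuity of the transition density then makes the chain a T-chain, so every compact set is petite, and a minorization with density bounded below on a small interval gives strong aperiodicity. With $\psi$-irreducibility, aperiodicity, and the drift condition in hand, the ergodic theorem for Harris chains gives a unique invariant probability measure $\pi$ and total-variation convergence $\|P^n(b,\cdot)-\pi\|\to 0$ from every initial state, hence from every initial distribution. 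Starting the chain from $\pi$ produces a stationary version of $\{B(i)\}$, and positive Harris recurrence with a unique stationary law forces ergodicity of that version. Absolute continuity of $\pi$ then follows from the invariance identity $\pi=\pi P$: integrating the conditional density gives
\[
\pi(\dd y)=\Bigl(\int_{[0,M]} f(y)\,\pi(\dd b)+\int_{(M,\infty)} f\bigl(y-(b-M)\bigr)\,\pi(\dd b)\Bigr)\dd y,
\]
so $\pi$ has a Lebesgue density on $(0,\infty)$, with no atom at $0$ because $X>0$ a.s.

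I expect the main obstacle to be the verification of the topological regularity conditions---$\psi$-irreducibility, aperiodicity, and petiteness of compact sets---rather than the drift inequality, which is immediate. These require a careful minorization argument built from the density $f$, with extra attention at the kink of $b\mapsto[b-M]^+$ at $b=M$ and at the boundary $b=0$, where the state can be entered with positive probability from all of $[0,M]$.
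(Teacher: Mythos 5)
Your proof is sound, but it follows a genuinely different route from the paper's. The paper proves Theorem~\ref{theo:stationary_dist_infinite} by exploiting the equivalence in Remark~\ref{remark:equivalence_to_Morans_Model}: Moran's release process $\{Z(i)\}$ is the Lindley recursion for the waiting time in a GI/G/1 queue with service times $X(i)$ and deterministic inter-arrival time $M$, so $\{B(i)\}$ is the sojourn time $Z(i)+X(i)$, and the conclusion is imported wholesale from Asmussen's stability results for that queue under $M>\bar{X}$ (traffic intensity below one). You instead work directly on the kernel $P(b,\cdot)=\mathbb{P}([b-M]^++X\in\cdot)$ with a Foster--Lyapunov drift argument: $V(b)=b$ gives uniform negative drift $\bar{X}-M$ off the small set $[0,M]$, and the density of $X$ supplies $\psi$-irreducibility, strong aperiodicity (indeed $P(b,\cdot)$ equals the law of $X$ for all $b\le M$, so $[0,M]$ is small with $\delta=1$), and hence total-variation convergence via the Harris-chain ergodic theorem. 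This is essentially the machinery the paper reserves for the finite-buffer case in Appendix~\ref{app:limiting_dist_Finite}, so your argument unifies the two regimes; it is also self-contained and makes the absolute continuity of $\pi$ explicit through $\pi=\pi P$, which the paper's citation-based proof leaves implicit. The trade-off is that your route needs the support/density assumptions on $f$ to run the minorization, whereas the queueing reduction is insensitive to them. Two minor points: your side claim that the drift condition yields $\int V\,\dd\pi<\infty$ is an overreach --- with a linear $V$ the (V2)-type drift gives positive recurrence but not a finite stationary mean, which for this Lindley-type chain actually requires $\E[X^2]<\infty$ --- but nothing in the theorem depends on it; and the statement that $\psi$ is equivalent to Lebesgue measure on all of $(0,\infty)$ should be qualified to the part of the state space reachable under the support of $f$, which is harmless under the paper's standing assumption of a density with infinite positive tail.
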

\begin{proof} The proof is provided in Appendix \ref{app:stationary_dist_infinite}. \end{proof}
\begin{theorem}\normalfont
Consider the storage process $\{B(i)\}$ in (\ref{eq:general_storage_equation}) with infinite buffer size and $M>\bar{X}$. Let $g(x)$ on $(0,\infty)$ be the limiting pdf of the energy buffer content, then $g(x)$ must satisfy the following integral equation\vspace{-0.1cm}
\begin{equation}
g(x)=f(x)\int\limits_0^M g(u) \dd u + \int\limits_M^{M+x} f(x-u+M) g(u) \dd u.\vspace{-0.5cm}
\label{eq:Integral_eqn_infinite}
\end{equation}
\label{theo:integral_eq_infinite}
\end{theorem}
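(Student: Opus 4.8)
The plan is to obtain (\ref{eq:Integral_eqn_infinite}) as the fixed-point (stationarity) condition that the limiting density $g$ must satisfy. First I would specialize the recursion in (\ref{eq:general_storage_equation}) to the infinite-buffer case $K\to\infty$, where the outer truncation $\min(\cdot,K)$ becomes inactive and the dynamics reduce to $B(i+1)=[B(i)-M]^++X(i)$. By Theorem \ref{theo:stationary_dist_infinite}, when $M>\bar{X}$ the process admits a unique stationary density $g$ that is absolutely continuous on $(0,\infty)$, so it suffices to derive the relation between the law of $B(i+1)$ and that of $B(i)$ and then impose that both equal $g$.

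The key intermediate object is the ``post-release'' variable $Y:=[B(i)-M]^+$ evaluated under the stationary law $B(i)\sim g$. I would show that $Y$ has a mixed-type distribution: it places an atom at $Y=0$ of mass $\mathbb{P}(B(i)\le M)=\int_0^M g(u)\,\dd u$, corresponding to the slots in which the buffer content does not exceed $M$, and on $(0,\infty)$ it is absolutely continuous with density $g(y+M)$, obtained from the change of variables $b=y+M$ on the event $\{B(i)>M\}$ (indeed $\mathbb{P}(Y\le y)=\mathbb{P}(B(i)\le M+y)$ for $y>0$, whose derivative is $g(M+y)$).

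Next, since $\{X(i)\}$ is i.i.d., $X(i)$ is independent of $B(i)$, so the law of $B(i+1)=Y+X(i)$ is the convolution of the law of $Y$ with the density $f$ of $X(i)$. Splitting this convolution according to the atom and the continuous part of $Y$ yields two contributions: the atom gives $f(x)\int_0^M g(u)\,\dd u$ (conditioned on $Y=0$, $B(i+1)=X(i)$ has density $f$), while the continuous part gives $\int_0^\infty f(x-y)\,g(y+M)\,\dd y$. Substituting $u=y+M$ rewrites the latter as $\int_M^\infty f(x-u+M)\,g(u)\,\dd u$, and because $f$ is supported on $[0,\infty)$ the integrand vanishes whenever $u>M+x$, truncating the upper limit to $M+x$. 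Imposing stationarity, so that $B(i+1)$ and $B(i)$ share the common density $g$, delivers exactly (\ref{eq:Integral_eqn_infinite}).

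The delicate point is the correct bookkeeping of the mixed-type law of $Y$: the atom at zero must be tracked separately so that it produces the isolated term $f(x)\int_0^M g(u)\,\dd u$ rather than being merged into the continuous part, and the truncation of the convolution integral from $(M,\infty)$ to $(M,M+x)$ must be justified via $\mathrm{supp}(f)\subseteq[0,\infty)$ (which holds since $X(i)=\eta\Pdl\hdl(i)\ge 0$). The existence of an absolutely continuous $g$ on $(0,\infty)$, guaranteed by Theorem \ref{theo:stationary_dist_infinite}, ensures that (\ref{eq:Integral_eqn_infinite}) is the genuine fixed-point relation characterizing the limiting distribution rather than a formal identity.
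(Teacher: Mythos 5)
Your proposal is correct and follows essentially the same route as the paper's proof: conditioning on whether $B(i)\leq M$ (giving the atom term $f(x)\int_0^M g(u)\,\dd u$) or $B(i)>M$ (giving the convolution term, truncated at $u=M+x$ by the nonnegativity of $X(i)$), and then imposing stationarity. Your write-up is simply a more careful, fully justified version of the paper's sketch, with the mixed-type law of $[B(i)-M]^+$ made explicit.
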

\begin{proof}
To understand the integral equation in (\ref{eq:Integral_eqn_infinite}), one may set $B(i)=u$ and $B(i+1)=x$, then (\ref{eq:general_storage_equation}) reads\vspace{-0.1cm}
\begin{equation}
x=\begin{cases} X(i) & u\leq M \\ u-M+X(i) & u>M \end{cases}.\vspace{-0.1cm}
\label{eq:Storage_eq_Infinite_u_x}
\end{equation}
Thus, $g(x|u\leq M)=f(x)$ and $g(x|u> M)=f(x-u+M)$ which is non-zero only for a non-negative amount of harvested energy, i.e., $x-u+\!M\!\geq\!0$. These considerations lead to (\ref{eq:Integral_eqn_infinite}). From the analogy between our storage model and Moran's model, c.f. Remark \ref{remark:equivalence_to_Morans_Model}, (\ref{eq:Integral_eqn_infinite}) is identical to \cite[eq. (5)]{Infinite_dam_Gani_Prabhu_1957}. \vspace{-0.3cm}%This explains the upper limit in the integral of the second term of (\ref{eq:Integral_eqn_infinite}). %This completes the proof.
\end{proof}
Next, we consider the case when the DL channel is Rayleigh block fading and provide the limiting distribution of the energy buffer content in the following corollary.
\begin{corollary} \normalfont
Consider the storage process in (\ref{eq:general_storage_equation}) with infinite buffer size and $M\!>\!\bar{X}$. If the EH process is exponentially distributed with pdf $f(x)\!=\!\lambda \e^{-\lambda x}$, where $\lambda\!\!=\!\!\frac{1}{\bar{X}}$  and $\delta\!=\!\lambda M\!=\!\frac{M}{\bar{X}}$, then the limiting pdf of the energy buffer content is $g(x)\!\!=\!\!-p\e^{p x}$, where $p\!<\!0$ is given by $p\!=\frac{-\delta-W_{0}(-\delta\e^{-\delta})}{M}$ and $W_0(\cdot)$ is the Lambert W function of order zero.
\label{theo:stationary_dist_infinite_exp}
\end{corollary}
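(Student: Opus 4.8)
The plan is to guess that the limiting density inherits the exponential form of the harvested energy and then pin down its decay rate using the integral equation of Theorem \ref{theo:integral_eq_infinite}. Specifically, I would adopt the ansatz $g(x)=-p\e^{px}$ with $p<0$, which is a valid pdf on $(0,\infty)$ because it integrates to one, and substitute it together with $f(x)=\lambda\e^{-\lambda x}$ into (\ref{eq:Integral_eqn_infinite}). Both integrals are then elementary: the first term, $f(x)\int_0^M g(u)\,\dd u$, contributes a multiple of $\e^{-\lambda x}$, while the second, after pulling $\e^{-\lambda x}$ out of $f(x-u+M)$ and evaluating $\int_M^{M+x}\e^{(\lambda+p)u}\,\dd u$, yields a linear combination of $\e^{px}$ and $\e^{-\lambda x}$.

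Next I would match coefficients. The case $p=-\lambda$ is excluded immediately (it would force a spurious $x\e^{-\lambda x}$ term and cannot satisfy the equation), so $\e^{px}$ and $\e^{-\lambda x}$ are linearly independent and (\ref{eq:Integral_eqn_infinite}) splits into two scalar conditions. A short computation shows that both the $\e^{px}$-coefficient and the $\e^{-\lambda x}$-coefficient collapse to the \emph{same} transcendental relation $\lambda+p=\lambda\e^{pM}$. Writing $s=pM$ and using $\delta=\lambda M$, this reads $\delta+s=\delta\e^{s}$.

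To solve it, I would substitute $w=-(s+\delta)$, which turns $\delta+s=\delta\e^{s}$ into $w\e^{w}=-\delta\e^{-\delta}$, the defining equation of the Lambert W function; hence $s=-\delta-W(-\delta\e^{-\delta})$ and $p=s/M$. The delicate step, and the one I expect to be the main obstacle, is the branch selection. The equation $w\e^{w}=-\delta\e^{-\delta}$ always admits the trivial root $w=-\delta$, which gives $s=0$, i.e. $p=0$, and hence the inadmissible zero density. Since $M>\bar{X}$ gives $\delta>1$, the argument satisfies $-\delta\e^{-\delta}\in(-1/\e,0)$, so the root $w=-\delta<-1$ lies on the $W_{-1}$ branch; the admissible root must therefore be the principal branch $W_0(-\delta\e^{-\delta})\in(-1,0)$. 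I would then verify that this choice indeed yields $s\in(-\delta,\,1-\delta)\subset(-\infty,0)$, so that $p<0$ as required, matching the claimed expression $p=\frac{-\delta-W_0(-\delta\e^{-\delta})}{M}$.

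Finally, I would invoke Theorem \ref{theo:stationary_dist_infinite}: the integral equation (\ref{eq:Integral_eqn_infinite}) is precisely the stationarity (fixed-point) condition for the transition kernel, and the stationary distribution is unique and absolutely continuous. Consequently, exhibiting a single valid pdf of the form $-p\e^{px}$ that satisfies the equation identifies it as the limiting density $g$, which completes the argument.
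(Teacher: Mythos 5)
Your proposal is correct and follows essentially the same route as the paper's proof: postulate an exponential ansatz, substitute into the integral equation (\ref{eq:Integral_eqn_infinite}), observe that matching the $\e^{px}$ and $\e^{-\lambda x}$ coefficients both reduce to $\lambda\e^{pM}=\lambda+p$, solve via the Lambert W function, and invoke the uniqueness guaranteed by Theorem \ref{theo:stationary_dist_infinite}. Your treatment of the branch selection (excluding the trivial root $w=-\delta$ on the $W_{-1}$ branch and confirming $p<0$ from $W_0(-\delta\e^{-\delta})\in(-1,0)$) is actually more explicit than the paper's, which simply asserts $p<0$ since $\delta>1$.
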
 %\vspace{-0.cm}
\begin{proof} The proof is provided in Appendix \ref{app:stationary_dist_infinite_exp}. \end{proof}
%%%%%%%%%%%%%%%%%%%%%%%%%%%%%%%%%%%%%%%%%%%%%%%%%%%%%%%%%%%%%%%%%%%%%%%%%%%%%%%%%%%%%%%%%%%%%%%%%%%%%%%%%%%%%%%%%%
\section{Finite-Capacity Energy Buffer}
\label{s:Finite_buffer}
In this section, we first provide the integral equation of the stationary distribution of the storage process $\{B(i)\}$ for a finite-size energy buffer and a general i.i.d. EH process. Then, the distribution is provided for a Rayleigh fading DL channel.%Then, for an i.i.d. exponential EH process, we provide the distribution of $\{B(i)\}$ in exact closed form in Corollary \ref{theo:limiting_dist_Finite_exponential_exact} as well as a tight approximation of it in Proposition \ref{prop:limiting_dist_Finite_exponential_approx}.% exponential-form.
\begin{theorem}  \normalfont
The storage process in (\ref{eq:general_storage_equation}), with a finite buffer size $K$, and an EH process $\{X(i)\}$, which is characterized by a distribution with an infinite positive tail, is a stationary and ergodic process which possesses a unique stationary distribution $\pi$ that has a density on $(0,K)$ and an atom at $K$. Furthermore, the process converges in total variation to the limiting distribution $\pi$ from any initial distribution. 
\label{theo:limiting_dist_Finite}
\end{theorem}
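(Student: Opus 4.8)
The plan is to treat $\{B(i)\}$ as a time-homogeneous Markov chain on the compact state space $S=[0,K]$ and to invoke the ergodic theory of $\psi$-irreducible, aperiodic chains, exactly as in the proof of Theorem \ref{theo:stationary_dist_infinite}; the new feature is that the finite buffer size makes recurrence essentially automatic, so the hypothesis $M>\bar{X}$ is no longer needed. The starting observation is that the one-step kernel induced by (\ref{eq:general_storage_equation}) decomposes cleanly: from a state $B(i)=b$ the next state equals $[b-M]^++X(i)$ as long as this stays below $K$, and equals $K$ otherwise. Hence $P(b,\dd y)=f\big(y-[b-M]^+\big)\mathbf{1}_{\{[b-M]^+<y<K\}}\,\dd y+\bar{F}\big(K-[b-M]^+\big)\,\delta_K(\dd y)$, an absolutely continuous part on $(0,K)$ plus a point mass at $K$. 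Because $X(i)$ has an infinite positive tail, $\bar{F}(K-[b-M]^+)\ge\bar{F}(K)>0$ for every $b\in S$, so the top state $K$ is reached from every state in a single step with probability at least $\bar{F}(K)$.

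First I would establish $\psi$-irreducibility with respect to Lebesgue measure on $(0,K)$. For $b\le M$ one has $[b-M]^+=0$, so a single step lands with density $f$ on $(0,K)$ and thus reaches every subinterval on which $f$ is positive. For $b>M$ I would use a descent argument: conditioning on several consecutive small harvests $X(i)<\varepsilon$ (an event of positive probability), the content decreases by roughly $M$ per slot, so after finitely many slots the chain enters $[0,M]$, from which any target interval is reached in one further step. Aperiodicity is immediate from the atom, since $P(K,\{K\})=\bar{F}(K-[K-M]^+)>0$, so the accessible singleton $\{K\}$ returns to itself in one step.

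Next I would argue positive Harris recurrence, which is the easy part here thanks to the atom and the compactness of $S$: the uniform bound $P(b,\{K\})\ge\bar{F}(K)>0$ shows that the return time to $K$ from any state is stochastically dominated by a geometric random variable, hence has finite mean, so $\{K\}$ is a positive recurrent accessible atom and the chain is positive Harris recurrent. The ergodic theorem for such chains then yields a unique invariant probability measure $\pi$, stationarity and ergodicity of $\{B(i)\}$, and convergence in total variation $\|P^{(n)}(b,\cdot)-\pi\|_{\mathrm{TV}}\to 0$ from every initial state. The claimed form of $\pi$ then follows by inserting the kernel decomposition into the invariance identity $\pi(\cdot)=\int_S P(b,\cdot)\,\pi(\dd b)$: the atom contributes $\pi(\{K\})=\int_S\bar{F}(K-[b-M]^+)\,\pi(\dd b)\ge\bar{F}(K)>0$, while for Borel $A\subset(0,K)$ Fubini's theorem gives $\pi(A)=\int_A\big[\int_S f(y-[b-M]^+)\mathbf{1}_{\{[b-M]^+<y\}}\,\pi(\dd b)\big]\,\dd y$, exhibiting a Lebesgue density on $(0,K)$.

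I expect the main obstacle to be the irreducibility step rather than recurrence: one must argue carefully that the lower part of $(0,K)$ is accessible when $M<K$ (the descent argument) and that the continuous part of the kernel genuinely charges the whole interval, which is where positivity of $f$ on the relevant support — guaranteed in the Rayleigh/exponential case — enters. The remaining delicate point is the measure-theoretic bookkeeping needed to rule out a singular continuous component of $\pi$ on $(0,K)$, but this is exactly what the Fubini identity above settles.
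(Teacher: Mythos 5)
Your proposal is correct and follows the same Meyn--Tweedie route as the paper: identify the atom at $K$ created by the infinite positive tail, establish irreducibility and aperiodicity, deduce positive Harris recurrence, and invoke the ergodic theorem for total-variation convergence. You are more explicit in two places where the paper merely asserts the conclusion, and slightly more demanding in one. Your uniform bound $P(b,\{K\})\geq\bar{F}(K)>0$, which dominates the return time to the atom by a geometric random variable, is a cleaner justification of positive recurrence than the paper's bare claim that all energy sets are reachable in finite mean time; and your Fubini computation inside the invariance identity is what actually delivers the claimed structure of $\pi$ (a density on $(0,K)$ plus an atom at $K$), a step the paper leaves implicit. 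The one substantive divergence is the choice of irreducibility measure: the paper takes $\phi$ concentrated on the atom, $\phi(\{K\})=1$, which is irreducible under the stated hypothesis alone, whereas you aim for Lebesgue-irreducibility on $(0,K)$ via the descent argument. That argument needs $\mathbb{P}(X<\varepsilon)>0$ for every $\varepsilon>0$, which is \emph{not} implied by an infinite positive tail (e.g.\ if $X\geq M$ almost surely, the buffer content never descends and the lower part of $(0,K)$ is unreachable). You flag this yourself, and it is harmless in the Rayleigh/exponential case the paper actually uses, but for the theorem as stated the atom-based irreducibility measure is the safer choice --- and your Fubini identity still yields absolute continuity of $\pi$ on $(0,K)$ without requiring the density to have full support.
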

\begin{proof} The proof is provided in Appendix \ref{app:limiting_dist_Finite}. \end{proof}
\begin{theorem}  \normalfont
Consider the storage process $\{B(i)\}$ in (\ref{eq:general_storage_equation}), with a finite buffer size $K$. Let $g(x)$ be the limiting pdf of the energy buffer content on $(0,K)$ and $\pi(K)$ be the limiting probability of a full buffer (i.e., the atom at $K$). If $f(x)$ and $\bar{F}(x)$ are respectively the pdf and the ccdf of $\{X(i)\}$, then, $g(x)$ and $\pi(K)$ must jointly satisfy\vspace{-0.1cm}
%\begin{eqnarray}\hspace{-0.2cm}
%g(x)=&\hspace{-0.35cm}\begin{cases}
%f(x)\int\limits_{u=0}^{M} g(u) \dd u + \int\limits_{u=M}^{M+x} f(x-u+M) g(u)\dd u &  \\
%\hspace{6cm}0\leq x<K-M & \label{eq:parta}\\
%f(x)\int\limits_{u=0}^{M} g(u) \dd u + \int\limits_{u=M}^{K} f(x-u+M) g(u)\dd u &\\
%+ \pi(K) f(x-K+M) & \hspace{-3cm}K-M\leq x<K  \label{eq:partb}\\
%\end{cases}\notag\\
%\pi(K)=&\hspace{-0.35cm}\frac{1}{1-\bar{F}(M)}\left[\bar{F}(K)\int\limits_{u=0}^{M} g(u) \dd u + \!\!\int\limits_{u=M}^{K} \bar{F}(K-u+M) g(u)\dd u \right],\label{eq:partc}
%\end{eqnarray}
%\begin{eqnarray}\hspace{-0.2cm}
\begin{numcases}{g(x)\!=\!\! \label{eq:g_integral_eqn_finite}}
\hspace{-0.1cm}f(x)\int\limits_{u=0}^{M} g(u) \dd u + \int\limits_{u=M}^{M+x} f(x-u+M) g(u)\dd u &  \nonumber \\ \vspace{-0.2cm}
\hspace{4.4cm}0\leq x<K-M & \label{eq:parta}\\
\hspace{-0.1cm}f(x)\int\limits_{u=0}^{M} g(u) \dd u + \int\limits_{u=M}^{K} f(x-u+M) g(u)\dd u & \nonumber \\
+\pi(K) f(x-K+M)  \hspace{1.1cm} K-M \leq x<K & \label{eq:partb}
\end{numcases}
\begin{equation}\vspace{-0.2cm}
\pi(K)\!=\!\frac{\left[\bar{F}(K)\int\limits_{u=0}^{M} g(u) \dd u + \!\!\int\limits_{u=M}^{K} \bar{F}(K-u+M) g(u)\dd u \right]}{1-\bar{F}(M)},\vspace{-0.1cm}
\label{eq:partc}
\end{equation}
and the unit area condition\vspace{-0.1cm}
\begin{equation}
\int\limits_{0}^{K} g(u) \dd u + \pi(K)=1.\vspace{-0.1cm}
\label{eq:unit_area_eq}
\end{equation}
\end{theorem}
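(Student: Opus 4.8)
The plan is to characterize the limiting distribution as the fixed point of the one-step transition operator of the Markov chain in (\ref{eq:general_storage_equation}), exactly as in the proof of Theorem~\ref{theo:integral_eq_infinite}, but now keeping track of the probability mass that the truncation $\min(\cdot,K)$ pushes onto the full-buffer state $K$. By Theorem~\ref{theo:limiting_dist_Finite} the limiting distribution exists, is unique, and consists of a density $g$ on $(0,K)$ together with an atom $\pi(K)$ at $K$; so it suffices to impose invariance and normalization on this density/atom pair.

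First I would condition on the current state and write the pre-truncation content as $Y=[B(i)-M]^++X(i)$. Setting $B(i)=u$ and reading (\ref{eq:general_storage_equation}) as in (\ref{eq:Storage_eq_Infinite_u_x}), the conditional density of $Y$ at a point $x$ is $f(x)$ when $u\le M$ and $f(x-u+M)$ when $u>M$, the latter supported on $x\ge u-M$. Since $B(i+1)=\min(Y,K)$, for every $x\in(0,K)$ the limiting density equals the density of $Y$ at $x$ under the stationary law of $u$, where that law is $g$ on $(0,K)$ plus the atom $\pi(K)$ at $u=K$. Integrating the two conditional branches against this law gives
\[
g(x)=f(x)\int_0^M g(u)\,\dd u+\int_M^{\min(K,\,x+M)} f(x-u+M)\,g(u)\,\dd u+\pi(K)\,f(x-K+M),
\]
where the last term is the atom's contribution ($u=K$). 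The support constraint $x-u+M\ge0$ forces the two regimes: for $0\le x<K-M$ the upper limit is $x+M<K$ and the atom term $f(x-K+M)$ vanishes, yielding (\ref{eq:parta}); for $K-M\le x<K$ the upper limit is $K$ and the atom term is active, yielding (\ref{eq:partb}).

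Next I would obtain the atom by equating $\pi(K)$ to the stationary overflow probability $\mathbb{P}(Y\ge K)$. Conditioning as above, $\mathbb{P}(Y\ge K\mid u)=\bar F(K)$ for $u\le M$, equals $\bar F(K-u+M)$ for $M<u\le K$, and equals $\bar F(M)$ at the atom $u=K$. Averaging against the stationary law of $u$ gives
\[
\pi(K)=\bar F(K)\int_0^M g(u)\,\dd u+\int_M^K \bar F(K-u+M)\,g(u)\,\dd u+\pi(K)\,\bar F(M),
\]
and solving this self-referential relation for $\pi(K)$ divides out the factor $1-\bar F(M)>0$, producing (\ref{eq:partc}). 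Finally, since the total mass of a probability distribution is one, the density and the atom must satisfy $\int_0^K g(u)\,\dd u+\pi(K)=1$, which is (\ref{eq:unit_area_eq}).

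The main obstacle, relative to the infinite-buffer case, is the correct bookkeeping of the boundary atom: it enters the density equation as an extra source term $\pi(K)f(x-K+M)$ active only on $[K-M,K)$, and it enters its own balance equation through the self-loop probability $\bar F(M)$ (the chance that a full buffer stays full), which is exactly what yields the $1/(1-\bar F(M))$ normalization. Identifying the breakpoint $x=K-M$ and ensuring that the overflow mass is counted once in $\pi(K)$ rather than double-counted in $g$ is the delicate step; everything else reduces to the same conditioning used in Theorem~\ref{theo:integral_eq_infinite}.
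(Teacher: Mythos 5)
Your proposal is correct and follows essentially the same route as the paper: condition on $B(i)=u$ under the stationary law (density $g$ on $(0,K)$ plus atom $\pi(K)$ at $K$), read off the conditional density of the pre-truncation content to get (\ref{eq:parta})--(\ref{eq:partb}), and compute the overflow probability with the ccdf to get (\ref{eq:partc}). Your explicit identification of the self-loop term $\pi(K)\bar F(M)$ as the origin of the $1/(1-\bar F(M))$ normalization is a slightly more transparent way of stating what the paper compresses into "we sweep over $0<u\leq K$."
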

\begin{proof}
The integral equations in (\ref{eq:g_integral_eqn_finite}), (\ref{eq:partc}) can be understood by adopting the same approach used to prove (\ref{eq:Integral_eqn_infinite}). In particular, if we set $B(i)=u$ and $B(i+1)=x$, then (\ref{eq:general_storage_equation}) reads
\begin{equation}
x=
\begin{cases}
X(i) & u\leq M \quad\&\quad X(i)<K\\
u-M+X(i) & u>M \quad\&\quad u-M+X(i)<K\\
K & {\rm otherwise.}%\left(u>M \quad\&\quad u-M+X(i)>K\right) || \left(u\leq M \quad\&\quad X(i)>K\right) 
\end{cases}\vspace{-0.1cm}
\label{eq:Our_storage_equation_cases}
\end{equation}
Consider first the continuous part of the distribution, i.e., $g(x)$ defined on $0\leq x <K$ given in (\ref{eq:g_integral_eqn_finite}). Eq. (\ref{eq:parta}) is identical to (\ref{eq:Integral_eqn_infinite}), however, we need to further ensure that the upper limit on $u$ given by $M+x$ (for a non-negative harvested energy) is in the domain of $g(u)$, i.e., $M+x<K$ must hold. Hence, (\ref{eq:parta}) is valid only for $x<K-M$ (with strict inequality). For the rest of the range of $x$ in (\ref{eq:partb}), i.e., $K-M\leq x<K$, the upper limit  $M+x$ on $u$ is larger than or equal to $K$. Thus, the whole range of $0< u\leq K$ contributes to $g(x)$. The range $0< u<K$ is covered by the first two integrals in (\ref{eq:partb}), and $u=K$ is considered in the last term. Finally, at $x=K$, the full buffer probability $\pi(K)$ in (\ref{eq:partc}) is obtained similar to (\ref{eq:partb}). However, rather than considering the pdf at the amount of harvested energy $x-[u-M]^+$ as in (\ref{eq:partb}), we consider the ccdf $\bar{F}(x-[u-M]^+)$ instead (at $x\!=\!K$). This is because the full buffer level $K$ is attained when the amount of harvested energy is larger than or equal to $K\!-\![u\!-\!M]^+$, where we sweep over $0\!<\!u\!\leq\!K$ to obtain (\ref{eq:partc}). This completes the proof.\vspace{-0.3cm}
\end{proof}
Next, we consider the case when the DL channel is Rayleigh block fading. We provide the exact limiting distribution of the energy buffer content in Corollary \ref{theo:limiting_dist_Finite_exponential_exact} and an exponential approximation of it in Proposition \ref{prop:limiting_dist_Finite_exponential_approx}.
\begin{corollary}  \normalfont
Consider the storage process $\{B(i)\}$ in (\ref{eq:general_storage_equation}) with a finite buffer size $K$ and an i.i.d. exponentially distributed EH process $\{X(i)\}$ with pdf $f(x)=\lambda\e^{-\lambda x}$, where $\lambda=\frac{1}{\bar{X}}$ and $\delta=\lambda M$, then the limiting pdf $g(x)$ of the energy buffer content and the full buffer probability $\pi(K)$ are given by
 \small
\begin{equation}
\begin{aligned}
&g(x)=\pi(K)\lambda\e^{-\lambda(x-K)} \Bigg[1+\sum\limits_{q=1}^{n}\frac{\e^{-\delta q}}{(q-1)!} \left(\delta q+\lambda(x-K)\right)^{q-1}\\
&\left(\frac{\lambda(x-K)}{q}+\delta-1\right)\Bigg], \hspace{0.5cm}\begin{aligned}&\\&[K-(n+1)M]^+\leq x<K-nM,\\ 
																																										&\hspace{2cm} n=0,\ldots,l', \end{aligned}
\end{aligned}
\label{eq:g_x_finite_exact}
\end{equation}
\normalsize \vspace{-0.1cm}
and \vspace{-0.2cm}
 \small
\begin{equation}
\begin{aligned}
&\pi(K)\!=\!\!\Bigg\{\!\sum\limits_{n=0}^{l-1}\e^{n\delta}\!\left(\!\e^{\delta}\!-\!\!1\!\!+\!\!\sum\limits_{q=1}^{n}\!\frac{\left(\delta\e^{-\delta}\right)^q}{q!}\!\left(\!\e^{\delta}\left(q\!-\!(n\!+\!1)\right)^q\!-\!(q\!-\!n)^q\right)\!\right) \\
&+\!\e^{l\delta}\!\left(\e^{\lambda\Delta}\!\!-\!1\!+\!\sum\limits_{q=1}^{l}\!\frac{\left(\delta\e^{-\delta}\right)^q}{q!}\!\left(\e^{\lambda\Delta}\!\left(\!q\!-\! \frac{K}{M}\right)^q\!-\!(q\!-\!l)^q\right)\right)\!+\!1\Bigg\}^{-1},
\end{aligned}
\label{eq:Pi_K_delta}
\end{equation}
\normalsize
where $K=lM+\Delta$ with $l\in\mathbb{Z}$ and $0\leq\Delta < M$. In (\ref{eq:g_x_finite_exact}), $l'$ is either $l'=l-1$ if $\Delta=0$ or $l'=l$ if $\Delta\neq 0$.%$l'=\begin{cases} l-1 &\Delta=0\\ l &\Delta\neq 0 \end{cases}$.  %%%$\sum\limits_{q=1}^{0} (\cdot)=0$,
\label{theo:limiting_dist_Finite_exponential_exact}  %i.e., $l=\lfloor\frac{K}{M}\rfloor$
\end{corollary}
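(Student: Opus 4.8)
The plan is to specialize the general integral equations (\ref{eq:parta})--(\ref{eq:partc}) to the exponential case $f(x)=\lambda\e^{-\lambda x}$, $\bar{F}(x)=\e^{-\lambda x}$, and to exploit the memoryless kernel to turn them into a differential recursion that can be integrated in closed form. Concretely, I would set $\phi(x)=\e^{\lambda x}g(x)$ and rewrite (\ref{eq:parta}) as $\phi(x)=\lambda\int_0^M g(u)\dd u+\lambda\e^{-\lambda M}\int_0^x\phi(v+M)\dd v$ on $(0,K-M)$ after the substitution $v=u-M$. Differentiating eliminates the integral and yields the advance differential equation $\phi'(x)=\lambda\e^{-\lambda M}\phi(x+M)$, equivalently $g'(x)=-\lambda g(x)+\lambda g(x+M)$, valid for $0\le x<K-M$. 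Because the derivative at $x$ is tied to the value one slot $M$ higher, this equation is naturally solved from the top of $[0,K)$ downward by the method of steps, producing one interval of length $M$ at a time.

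First I would treat the top interval $[K-M,K)$. Multiplying (\ref{eq:partb}) by $\e^{\lambda x}$ makes every term on its right-hand side independent of $x$ (the atom contributes the constant $\pi(K)\lambda\e^{\lambda(K-M)}$), so $\phi$ is constant there and $g(x)=\pi(K)\lambda\e^{-\lambda(x-K)}$ on $[K-M,K)$; this is exactly the $n=0$ case of (\ref{eq:g_x_finite_exact}) and anchors the recursion. On the $n$-th interval $[K-(n+1)M,K-nM)$ the shift $x+M$ lands in the $(n-1)$-th interval, so $\phi_n'(x)=\lambda\e^{-\lambda M}\phi_{n-1}(x+M)$. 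Since $\phi_0$ is constant, an induction on $n$ shows that $\phi_n$ is a polynomial of degree $n$ in $x$, hence $g$ on the $n$-th interval is a degree-$n$ polynomial times $\e^{-\lambda x}$, matching the structure of (\ref{eq:g_x_finite_exact}). Integrating the recursion interval by interval generates the explicit coefficients, which I would then show collapse to the truncated sum $\sum_{q=1}^{n}\frac{\e^{-\delta q}}{(q-1)!}(\delta q+\lambda(x-K))^{q-1}(\frac{\lambda(x-K)}{q}+\delta-1)$ stated in (\ref{eq:g_x_finite_exact}).

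The integration constant introduced at each step is fixed by matching across interval boundaries. At the interior junctions $x=K-nM$ with $n\ge 2$ the density is continuous (the right-hand side of (\ref{eq:parta}) depends continuously on $x$), whereas at $x=K-M$ the atom term in (\ref{eq:partb}) forces a jump of height $\pi(K)\lambda$; equivalently, the atom equation (\ref{eq:partc}) supplies the correct closure relating $\int_0^M g\,\dd u$ to the tail contributions. Since every quantity is proportional to the single scale $\pi(K)$, the final step is normalization: substituting the piecewise $g$ into the unit-area condition (\ref{eq:unit_area_eq}) and integrating each full length-$M$ interval together with the final length-$\Delta$ interval (using $K=lM+\Delta$) gives $\pi(K)[J+1]=1$, where $J$ is the double sum in (\ref{eq:Pi_K_delta}); solving for $\pi(K)$ reproduces (\ref{eq:Pi_K_delta}).

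The main obstacle is not setting up the recursion but verifying that the constants produced by the method of steps telescope into the compact closed form in (\ref{eq:g_x_finite_exact}); this will require an inductive combinatorial identity for the accumulated polynomial coefficients (an Abel/Borel-type summation) together with careful tracking of the exponential prefactors $\e^{-\delta q}$. A secondary technical point is the boundary bookkeeping when $K$ is not an integer multiple of $M$: the truncated last interval of width $\Delta$ must be handled separately, which is the origin of the $\e^{\lambda\Delta}$ terms in (\ref{eq:Pi_K_delta}) and of the case distinction $l'=l-1$ when $\Delta=0$ versus $l'=l$ when $\Delta\neq 0$ in (\ref{eq:g_x_finite_exact}).
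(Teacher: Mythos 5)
Your proposal is correct and follows essentially the same route as the paper: specialize the integral equations to the exponential kernel, anchor the solution on the top stripe $[K-M,K)$ using (\ref{eq:partb})--(\ref{eq:partc}), traverse backward in width-$M$ sections by induction (your advance differential equation $\phi'(x)=\lambda\e^{-\lambda M}\phi(x+M)$ is the standard way to execute that step, and your boundary matching, including the jump of height $\pi(K)\lambda$ at $x=K-M$, is consistent with (\ref{eq:g_x_finite_exact})), and finally fix $\pi(K)$ from the unit-area condition (\ref{eq:unit_area_eq}). The combinatorial collapse of the stepwise constants into the compact sum is exactly the "lengthy derivation" the paper also omits, deferring to Moran and Prabhu, so your plan and the paper's proof sketch coincide in both substance and level of detail.
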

\begin{proof} 
First, we note that the solution of $g(x)$ in (\ref{eq:g_x_finite_exact}) is obtained in stripes of width $M$. This is due to the upper integral limit $M+x$ in (\ref{eq:parta}), hence the width-$M$ stripes solution is in fact general for any distribution of the i.i.d. EH process \cite{Moran_book_1961}. We derived $g(x)$ by induction. In particular, we obtained $g(x)$ in the range $K-M\leq x <K$ from (\ref{eq:partb}) and (\ref{eq:partc}), after setting $f(x)=\lambda\e^{-\lambda x}$ and $\bar{F}(x)=\e^{-\lambda x}$. Then, using (\ref{eq:parta}) and (\ref{eq:partb}), we traversed back in sections of width $M$ until $x=0$. Note that the provided solution is general for any $K$ (i.e., $K$ is not necessarily an integer multiple of $M$). The exact derivation is lengthy so we omit it and provide it in the journal version of this paper. However, we note that due to the analogy between our storage model and Moran's dam model, c.f. Remark \ref{remark:equivalence_to_Morans_Model}, the solution of $g(x)$ is identical to \cite[eq. (3.6)]{Moran_book_1961} (which is also given by Prabhu in \cite[Section 2]{Prabhu_1958}). After getting $g(x)$, $\pi(K)$ is obtained by solving (\ref{eq:unit_area_eq}).\vspace{-0.2cm}
\end{proof}
Since the exact limiting distribution of the buffer content provided in Corollary \ref{theo:limiting_dist_Finite_exponential_exact} is quite complicated, we propose an exponential-type approximation which will be used in the AER and the outage probability analysis in Section \ref{s:BER_outage_analysis}. 
\begin{proposition}  \normalfont
First for notational brevity, we define the $n^{{\rm th}}$ section of $g(x)$ in (\ref{eq:g_x_finite_exact}) as $g_n(x)=g(x),\,[K-(n+1)M]^+\leq x<K-nM$. The limiting distribution of the storage process described in Corollary \ref{theo:limiting_dist_Finite_exponential_exact} can be approximated in the range $0\leq x < K-n_c M$ by $\tilde{g}(x)$, where $n_c$ is some chosen section number after which $g_n(x)\approx \tilde{g}(x),\,\forall n\geq n_c$ as shown in Fig. \ref{fig:pdf_approximation}. For $0\leq n<n_c$, $\tilde{g}_n(x)$ is given by $g_n(x)$ in (\ref{eq:g_x_finite_exact}) after replacing $\pi(K)$ with $\tilde{\pi}(K)$, i.e., $\tilde{g}_n(x)=\frac{\tilde{\pi}(K)}{\pi(K)}g_n(x)$, where $\tilde{\pi}(K)$ is the approximate full buffer probability that ensures a unit area of the approximate distribution. The proposed approximation is tight for $K\geq 3M$ and $n_c\geq 2$.
\vspace{-0.3cm}
\begin{figure}[!h]
\centering
\scalebox{0.59}{\input{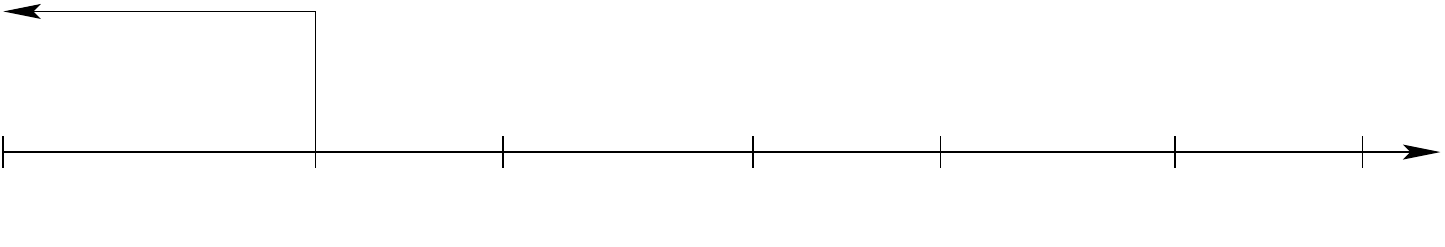_t}}\vspace{-0.3cm}
\caption{Pdf approximation.}
\label{fig:pdf_approximation}
\end{figure}\newline
We propose an exponential-type approximation given by $\tilde{g}(x)=c\e^{dx}$, where $d$ and $c$ are given by \vspace{-0.2cm}
\begin{equation}
d=\frac{-\delta-W_j(-\delta\e^{-\delta})}{M},\quad j=\begin{cases} -1 & 0<\delta\leq1\\
																																		0 & \delta>1
																																\end{cases},
\label{eq:d_approx}
\end{equation} \vspace{-0.2cm}
\small
\begin{equation*}
c=\tilde{\pi}(K)\,\lambda\underbrace{\e^{\lambda K} \left[1\!+\!\sum\limits_{q=1}^{l'}\frac{\e^{-\delta q}}{(q-1)!} \left(\delta q\!-\!\lambda K\right)^{q-1}\left(\frac{-\lambda K}{q}\!+\!\delta\!-\!1\right)\right]}_{\Sigma_1},
\label{eq:c_approx}
\end{equation*}
\normalsize \vspace{-0.2cm}
and the approximate atom at $K$ is given by $\tilde{\pi}(K)\!=\!\!$
\small
\begin{equation}
%\begin{aligned}
\Bigg[\!1\!+\!\Sigma_2\!+\!\!\sum\limits_{n=0}^{n_c-1}\!\!\e^{n\delta}\Big(\!\e^{\delta}\!-\!1\!+\!\sum\limits_{q=1}^{n}\!\frac{\left(\delta\e^{-\delta}\right)^q}{q!}\!\left(\e^{\delta}\left(q\!-\!(n\!+\!1)\right)^q\!-\!(q\!-\!n)^q\!\right)\!\Big)\!\Bigg]^{-1},
%\end{aligned}
\label{eq:Pi_K_approx}
\end{equation}
\normalsize
where $\Sigma_2=\begin{cases} \frac{\lambda\Sigma_1}{d}\left(\e^{d(K-n_c M)}-1\right) & \delta\neq 1\\
\lambda\Sigma_1\left(K-n_c M\right) & \delta=1 \end{cases}$, and $W_j(\cdot)$ is the $j^{\text{th}}$ order Lambert W function.
\label{prop:limiting_dist_Finite_exponential_approx}
\end{proposition}
\begin{proof} The proof is provided in Appendix \ref{app:limiting_dist_Finite_exponential_approx}.\end{proof}%  $\sum\limits_{q=1}^{0} (\cdot)=0$,
%%%%%%%%%%%%%%%%%%%%%%%%%%%%%%%%%%%%%%%%%%%%%%%%%%%%%%%%%%%%%%%%%%%%%%%%%%%%%%%%%%%%%%%%%%%%%%%%%%%%%%%%%%%%%%%%%%
\section{AER and Outage Probability Analysis}
\label{s:BER_outage_analysis}
In this section, we analyze the AER and the outage probability of the communication over the UL channel, when both UL and DL channels are Rayleigh faded. In the finite-size buffer case, we use the approximate pdf $\tilde{g}(x)\!=\!c\e^{dx}$ given in Proposition \ref{prop:limiting_dist_Finite_exponential_approx}, whereas for an infinite-size buffer with $\delta\!>\!1$, we use the exact pdf $g(x)\!=\!-p\e^{px}$ given in Corollary \ref{theo:stationary_dist_infinite_exp}. Hence, we show only the results of the finite-size buffer and deduce the latter by setting $c\!=\!-p$ and $d\!=\!p$. For an infinite-size buffer with $\delta\leq1$, we use $\Pul(i)\!=\!M$, $\forall i$, c.f. Theorem \ref{theo:no_stationary_dist}.\vspace{-0.1cm}
%the results for the infinite-size buffer with $\delta>1$ can be obtained by setting $c=-d$ we show the results only for the finite-size buffer 
\subsection{AER Analysis}
\label{ss:BER_Analysis}
For uncoded transmission, the bit or symbol error rate of many modulation schemes can be expressed as $P_e(\gamma)=aQ(\sqrt{b\gamma})$ \cite{Wang_Giannakis_2003}, where $\gamma$ is the instantaneous signal-to-noise-ratio (SNR), $Q(\cdot)$ is the Gaussian Q-function, and $a$, $b$ depend on the modulation scheme used, e.g., for binary phase shift keying (BPSK) $a\!=\!1$ and $b\!=\!2$.

For an infinite-size buffer with $\delta\leq 1$, the AER is given by \vspace{-0.1cm}
%\begin{eqnarray}
%P_e&=&\int\limits_0^\infty a Q\left(\sqrt{b\frac{M}{\sigma_n^2} h}\right) \frac{1}{\Omegaul}\e^{-\frac{h}{\Omegaul}} \dd h\notag\\
%P_e\big|_{K\to\infty,\delta\leq 1}&=&\frac{a}{2}\left[1-\frac{\sqrt{b\bar{\gamma}\delta}}{\sqrt{2+b\bar{\gamma}\delta}}\right]
%\end{eqnarray}
%\small %_{K\to\infty,\delta\leq 1}   \begin{aligned}&K\to\infty,\\[-2pt]&\delta\leq 1\end{aligned}
\begin{equation}
P_e\big|_{\infty,\delta\leq 1}\!=\!\!\int\limits_0^\infty\!a Q\left(\!\sqrt{\frac{bM\Omegaul h}{\sigma_n^2}}\right)\!\!\e^{-h} \dd h\!=\!\frac{a}{2}\left[\!1\!-\!\sqrt{\frac{b\bar{\gamma}\delta}{2+b\bar{\gamma}\delta}}\right],\vspace{-0.1cm}
\label{eq:BER_infinite}
\end{equation}
%\normalsize
where $\bar{\gamma}$ is defined as $\bar{\gamma}=\Omegaul \bar{X}/\sigma_n^2=\Omegaul/\left(\lambda \sigma_n^2\right)$.
For a finite-size buffer, the AER is given by\vspace{-0.2cm}
%\begin{eqnarray}
%P_e&=&\int\limits_0^M\int\limits_0^\infty a Q\left(\sqrt{\frac{bx\Omegaul h}{\sigma_n^2}}\right) \e^{-h} \dd h\, c\e^{dx} \dd x+\mathbb{P}(\Pul=M)P_e\big|_{K\to\infty,\delta\leq 1}\notag\\
%&=&\frac{ a c}{2\lambda}\int\limits_0^{\delta}\e^{\frac{d}{\lambda}x}\left(1-\sqrt{\frac{b\bar{\gamma}x}{2+b\bar{\gamma}x}}\right) \dd x+\mathbb{P}(\Pul=M)P_e\big|_{K\to\infty,\delta\leq 1}
%\label{eq:BER_finite_step1}
%\end{eqnarray}
\begin{equation}
P_e\big|_{\rm F}\!=\!\int\limits_0^M\!\int\limits_0^\infty a Q\left(\sqrt{\frac{bx\Omegaul h}{\sigma_n^2}}\right) \e^{-h} \dd h\, c\e^{dx} \dd x+P_MP_e\big|_{\infty,\delta\leq 1}, \vspace{-0.1cm}
\label{eq:BER_finite_step1}
\end{equation}
where we define $P_M\!=\!\mathbb{P}(\Pul(i)\!=\!M)=\mathbb{P}(B(i)\geq M)=1-\int_0^M c\e^{dx} \dd x=1-\frac{c}{\lambda}$, where we used $\lambda \e^{dM}=\lambda+d$, c.f. Appendix \ref{app:stationary_dist_infinite_exp}. The first term in (\ref{eq:BER_finite_step1}) can be simplified to $\frac{ a c}{2\lambda}\int_0^{\delta}\e^{\frac{d}{\lambda}x}\left(1-\sqrt{\frac{b\bar{\gamma}x}{2+b\bar{\gamma}x}}\right) \dd x$, where $\int_0^{\delta}\e^{\frac{d}{\lambda}x}\dd x=\left(\e^{dM}-1\right)/(d/\lambda)=1$. Substituting back in (\ref{eq:BER_finite_step1}), we get\vspace{-0.2cm}
% $P_M\!=\!\mathbb{P}(\Pul\!=\!M)=\mathbb{P}(B(i)\geq M)=1-\int_0^M c\e^{dx} \dd x=1-c\left(\frac{\e^{dM}-1}{d}\right)=1-\frac{c}{\lambda}$,
%\begin{equation}
%P_e\!=\!\frac{a}{2}\left(1-\sqrt{\frac{b\bar{\gamma}\delta}{2+b\bar{\gamma}\delta}}\right)+\frac{ a c}{2\lambda}\left[\sqrt{\frac{b\bar{\gamma}\delta}{2+b\bar{\gamma}\delta}}-\int\limits_0^{\delta}\sqrt{\frac{b\bar{\gamma}x}{2+b\bar{\gamma}x}} \e^{\frac{d}{\lambda}x}\dd x\right]
%\end{equation}
\begin{equation}
P_e\big|_{\rm F}\!=\!\frac{a}{2}\!\Bigg[1\!-\!\left(\!1\!-\!\frac{c}{\lambda}\right)\sqrt{\!\frac{b\bar{\gamma}\delta}{2\!+\!b\bar{\gamma}\delta}}-\frac{c}{\lambda}\!\int\limits_0^{\delta}\!\!\sqrt{\!\frac{b\bar{\gamma}x}{2\!+\!b\bar{\gamma}x}} \e^{\frac{d}{\lambda}x}\dd x\Bigg],\vspace{-0.2cm}
\label{eq:BER_finite}
\end{equation}
where the integral in (\ref{eq:BER_finite}) has finite limits and can be solved numerically.

In order to study the diversity order of the AER in (\ref{eq:BER_infinite}) and (\ref{eq:BER_finite}), we consider the high SNR regime, i.e., as $\bar{\gamma}\to\infty$. We first note that $\lim_{y\to\infty}\left(1-\!\sqrt{\frac{y}{2+y}}\right)\!=\!\frac{1}{y}$. Hence, the AER in (\ref{eq:BER_infinite}) tends asymptotically (denoted by ``$\asymp$") to 
$P_e\big|_{\infty,\delta\leq 1}\asymp \frac{a}{2b\delta\bar{\gamma}}$. That is, an infinite-size buffer with $\delta\leq1$ achieves a diversity order of 1 with respect to the AER. For a finite-size buffer, the first term in (\ref{eq:BER_finite_step1}), given by $\frac{ a c}{2\lambda}\int_0^{\delta}\e^{\frac{d}{\lambda}x}\left(1-\sqrt{\frac{b\bar{\gamma}x}{2+b\bar{\gamma}x}}\right) \dd x$, tends asymptotically to $\frac{ a c}{2\lambda}\int_0^{\delta}\e^{\frac{d}{\lambda}x}\frac{1}{b\bar{\gamma}x} \dd x$, which also has a diversity order of 1. Hence, the diversity order is not affected by the finite capacity of the energy buffer. Therefore, (\ref{eq:BER_finite}) tends asymptotically to\vspace{-0.2cm}
\begin{equation}\hspace{-0.2cm}
P_e\big|_{\rm F}\!\asymp\!\frac{ a c}{2\lambda b\bar{\gamma}}\int\limits_0^{\delta}\frac{\e^{\frac{d}{\lambda}x}}{x} \dd x+ \left(1-\frac{c}{\lambda}\right) \frac{a}{2b\delta\bar{\gamma}}.\vspace{-0.1cm}
\label{eq:BER_finite_asymp}
\end{equation}
%%%%%%%%%%%%%%%%%%%%%%%%%%%%%%%%%%%%%%%%%%%%%%%%%%%%%%%%%%%%%%%%%%%%%%%%%%%%
\subsection{Outage Probability Analysis}
\label{ss:Outage_probability_analysis}
Since the CSI is unknown at the EH node, the node transmits data at a constant rate $R_0$ in bits/(channel use). Therefore, assuming a capacity-achieving code, an outage occurs whenever $R_0>\log_2(1+\gamma)\Rightarrow \gamma<\gamma_{\rm thr}$, where $\gamma$ is the UL instantaneous SNR and $\gamma_{\rm thr}=2^{R_0}-1$. Hence, the outage probability for an infinite-size buffer with $\delta\leq 1$ is\vspace{-0.1cm}
\begin{equation}
P_{\text{out}}\big|_{\infty,\delta\leq 1}\!=\!\mathbb{P}\left(\gamma<\gamma_{\text{thr}}\right)\!=\!\mathbb{P}\left(\frac{M\Omegaul h}{\sigma_n^2}\!<\!\gamma_{\text{thr}}\right)\!=\!1-\e^{-\frac{\gamma_{\text{thr}}}{\delta\bar{\gamma}}}.
\label{eq:Pout_infinite}\vspace{-0.1cm}
\end{equation}
For a finite-size buffer, the outage probability is given by \vspace{-0.2cm}
\begin{equation}
P_{\text{out}}\big|_{\rm F}\!=\!\int\limits_0^M\!\mathbb{P}\left(\!\frac{x \Omegaul h}{\sigma_n^2}<\gamma_{\text{thr}}\!\right) c\e^{dx} \dd x\!+ \!P_MP_{\text{out}}\big|_{\infty,\delta\leq 1}.\vspace{-0.1cm}
\label{eq:Pout_finite_step}
\end{equation} 
The first term in (\ref{eq:Pout_finite_step}) reduces to $\int_0^\delta  \mathbb{P}\left(x\bar{\gamma}h\!<\!\gamma_{\text{thr}}\right) \frac{c}{\lambda}\e^{\frac{d}{\lambda}x} \dd x=\frac{c}{\lambda}\int_0^\delta\left(1-\e^{-\frac{\gamma_{\text{thr}}}{x\bar{\gamma}}}\right)\e^{\frac{d}{\lambda}x} \dd x$, where $\int_0^{\delta}\e^{\frac{d}{\lambda}x}\dd x=1$. Using $P_M=1-\frac{c}{\lambda}$ and $P_{\text{out}}\big|_{\infty,\delta\leq 1}$ in (\ref{eq:Pout_infinite}), (\ref{eq:Pout_finite_step}) reduces to\vspace{-0.3cm}
\begin{equation}
P_{\text{out}}\big|_{\rm F}\!=\left(1-\e^{-\frac{\gamma_{\text{thr}}}{\bar{\gamma}\delta}}\right) +\frac{c}{\lambda}\left[\e^{-\frac{\gamma_{\text{thr}}}{\bar{\gamma}\delta}}-\int\limits_{0}^{\delta}\e^{-\frac{\gamma_{\text{thr}}}{\bar{\gamma}x}}\e^{\frac{d}{\lambda}x} \dd x\right].\vspace{-0.1cm}
\label{eq:Pout_finite}
\end{equation}
Using $\lim_{y\to\infty}\e^{-\frac{1}{y}}=1-\frac{1}{y}+o(y^{-2})$, the outage probability in (\ref{eq:Pout_infinite}) tends asymptotically to $P_{\text{out}}\big|_{\infty,\delta\leq 1}\asymp\frac{\gamma_{\text{thr}}}{\delta\bar{\gamma}}$, i.e., with a diversity order of 1. For a finite-size buffer, the outage probability in (\ref{eq:Pout_finite}) tends asymptotically to\vspace{-0.2cm}
\begin{equation}
P_{\text{out}}\big|_{\rm F} \asymp  \left(1-\frac{c}{\lambda}\right)\frac{\gamma_{\text{thr}}}{\bar{\gamma}\delta}+\frac{c\gamma_{\text{thr}}}{\lambda\bar{\gamma}} \int\limits_{0}^{\delta}\frac{1}{x}\e^{\frac{d}{\lambda}x} \dd x.\vspace{-0.2cm}
\label{eq:outage_prob_delta_asymp}
\end{equation}
Similar to the AER, the diversity order of the outage probability is unaffected by the finite capacity of the energy buffer. We note that, although for a small buffer size, namely $K\leq 3M$, the approximate pdf on $[0,M]$ is not tight, it can be shown using the exact pdf in (\ref{eq:g_x_finite_exact}) that in this case the diversity order of the AER and the outage probability is still 1.
%%%%%%%%%%%%%%%%%%%%%%%%%%%%%%%%%%%%%%%%%%%%%%%%%%%%%%%%%%%%%%%%%%%%%%%%%%%%%%%%%%%%%%%%%%%%%%%%%%%%%%%%%%%%%%%%%%
\section{Simulation and Numerical Results}
\label{s:Simulations}
In this section, we evaluate the performance of the investigated energy management policy through simulations. The simulation parameters are listed in Table \ref{tab:simulation_parameters}. %An AP with $12\,$dBi antenna gain and $5\,$dB noise figure transmits an RF signal of power $\Pdl=1\,$W at a center frequency of $\unit[915]{MHz}$ to power an EH node located at a distance of $5\,$m from the AP. The environment is characterized by a path loss exponent of $2.7$ and both UL and DL channels are assumed to be Rayleigh block faded. The EH node is equipped with a $2\,$dBi-gain antenna, a transmit amplifier with inefficiency of $\alpha=1.5$, and an EH module characterized by an RF-to-DC conversion efficiency of $\eta=0.7$. The EH node stores the harvested energy in a capacitor with a storage efficiency of $\beta=0.9$. This results in an average harvested energy of $\widetilde{\bar{X}}=10^{-5}\,$J. The total constant power consumption of the EH node is assumed to be $\Pc=0.2\,\mu$W. In the UL channel, the EH node transmits an information signal over a carrier frequency of $2.45\,$GHz and a bandwidth of $BW$.  Simulations are performed for capacitances  with storage size of $K\!=\!4\widetilde{\bar{X}}$, $7\widetilde{\bar{X}}$, and $20\widetilde{\bar{X}}$. % (which includes the circuit power consumption and the leakage power of the capacitor)
\begin{table}[!tp]
\caption{System Parameters}
\begin{tabular}{@{}ll@{}}  \toprule   %\addlinespace[-5em]   
Parameter & Value \\ \midrule \addlinespace[0.5em]
AP to EH node distance & $5\,$m\\ \addlinespace[0.5em]
AP and EH node antenna gains & $12\,$dBi and $2\,$dBi\\ \addlinespace[0.5em]
DL transmit power & $\Pdl=1\,$W\\ \addlinespace[0.5em]
AP noise figure & $5\,$dB \\ \addlinespace[0.5em]
Path loss exponent of DL and UL channels & $2.7$ \\ \addlinespace[0.5em]
DL and UL channel models & Rayleigh block fading \\ \addlinespace[0.5em]
DL and UL center frequencies  & $915\,$MHz and $2.45\,$GHz\\ \addlinespace[0.5em]
%UL center frequency & $2.45\,$GHz\\ \addlinespace[0.5em]
Power amplifier inefficiency & $\alpha=1.5$ \\ \addlinespace[0.5em]
Storage efficiency & $\beta=0.9$ \\ \addlinespace[0.5em]
Average harvested energy & $\widetilde{\bar{X}}=\beta\bar{X}=10^{-5}\,$J\\ \addlinespace[0.5em]
RF-to-DC conversion efficiency & $\eta=0.7$\\ \addlinespace[0.5em]
Total constant power consumption & $\Pc=0.2\,\mu$W\\ \addlinespace[0.5em]
Storage capacity & $K\!=\!4\widetilde{\bar{X}}$, $7\widetilde{\bar{X}}$, and $20\widetilde{\bar{X}}$\\ \addlinespace[0.5em]
\bottomrule
\end{tabular}
\label{tab:simulation_parameters}
\end{table}%%%%%%%%%%%%%%%%%%%%%%%%%%%%%%%%%%%%%%%%5

Fig. \ref{fig:BER_Rayleigh_finite_battery_closed_form_K_Eavg} shows the AER of the received signal at the AP when the EH node transmits a BPSK signal over a bandwidth of $BW=5\,$MHz. At room temperature ($300\,$K), this corresponds to a noise power of $-103\,$dBm at the AP and an SNR of $\widetilde{\bar{\gamma}}=\Omegaul \widetilde{\bar{X}}/\sigma_n^2=24.6\,$dB, where $\widetilde{\bar{X}}$ is given in Table \ref{tab:simulation_parameters}. We sweep over $\tilde{\delta}=\tilde{M}/\widetilde{\bar{X}}=0.1,\ldots,1.7$, which corresponds to a desired UL transmit power of $M=0.6\,\mu{\rm W},\cdots,12\,\mu$W. The closed-form results shown in Fig. \ref{fig:BER_Rayleigh_finite_battery_closed_form_K_Eavg} are obtained from the expressions in Section \ref{ss:BER_Analysis}, where for a finite-storage capacity, we use (\ref{eq:BER_finite}), and for an infinite-storage capacity, we use (\ref{eq:BER_infinite}) for $\tilde{\delta}\leq1$ and (\ref{eq:BER_finite}) for $\tilde{\delta}>1$, with $c=-p$ and $d=p$, c.f. Corollary \ref{theo:stationary_dist_infinite_exp}. We observe that the closed-form results agree perfectly with the simulated results. This emphasizes the tightness of the approximate pdf provided in Proposition  \ref{prop:limiting_dist_Finite_exponential_approx}. It is observed that for the considered energy management scheme, the optimal $\tilde{\delta}$, for which the AER is minimized, is always $\leq1$ and  increases with the storage capacity\footnote{We note that while instantaneous CSI knowledge is not required for the adopted transmission protocol, statistical CSI is needed if $\tilde{\delta}$ is to be optimized.}. As $K\to\infty$, the optimal $\tilde{\delta}\to1$. In other words, for the considered energy transmission policy, the optimal desired UL transmit power is higher for larger energy buffers, but it is always less than the average harvested power. Furthermore, our results show that, for a given storage capacity, the optimal desired UL power decreases with the SNR. This result is not shown here due to space limitations. %We also note that the outage probability behavior is similar to that of the AER, i.e., the same trend of the optimal UL transmit power with the energy storage capacitance and the SNR holds.
\begin{figure}[!tp]
\centering
\includegraphics[width=0.39\textwidth]{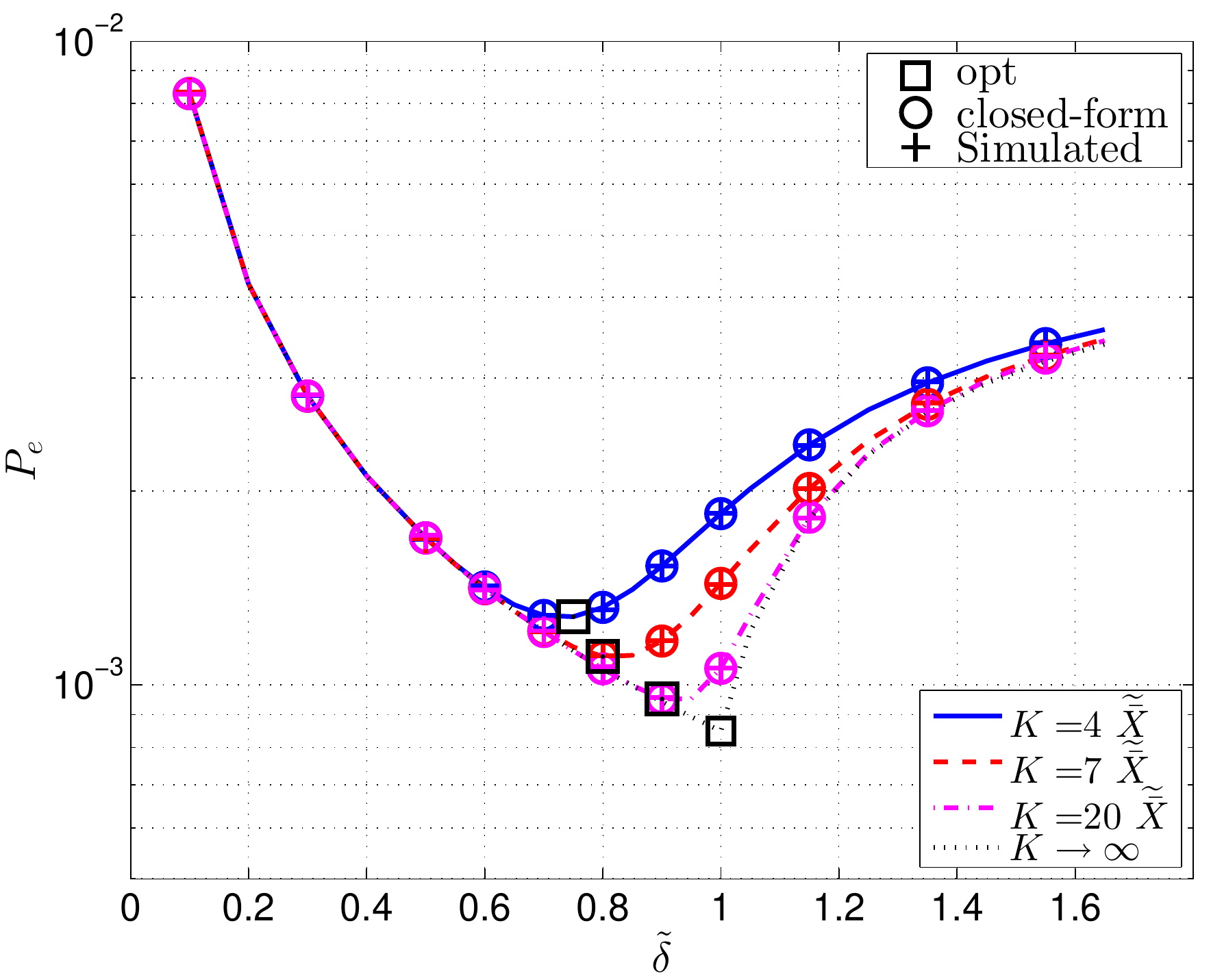}\vspace{-0.3cm}
\caption{AER for different buffer sizes and different desired UL transmit power.}% of a BPSK signal for
\label{fig:BER_Rayleigh_finite_battery_closed_form_K_Eavg}
\end{figure}

Fig. \ref{fig:Outage_Rayleigh_finite_battery_vs_SNR} shows the outage probability of the UL channel when the EH node transmits at a constant rate of $2.0574\,$bits/(channel use), i.e., for $\gamma_{\rm thr}=5\,$dB. We sweep over different SNRs of $\widetilde{\bar{\gamma}}=10,\ldots,40\,$dB, which corresponds to an UL channel bandwidth range of $BW=\frac{\Omegaul \widetilde{\bar{X}}}{\widetilde{\bar{\gamma}}K_{\rm B}T_{\rm e}}=147\,$MHz$,\ldots,147\,$KHz, respectively, where $K_{\rm B}$ is Boltzmann's constant and $T_{\rm e}$ is the equivalent noise temperature of the AP. For a given SNR $\widetilde{\bar{\gamma}}$ and a given buffer size $K$, the energy management policy is operated at the optimal $\tilde{\delta}$ for which the outage probability is minimized. The closed-form results shown in Fig. \ref{fig:Outage_Rayleigh_finite_battery_vs_SNR} are obtained from the expressions in Section \ref{ss:Outage_probability_analysis}, where for a finite-storage capacity, we use (\ref{eq:Pout_finite}), and for an infinite-storage capacity, we use (\ref{eq:Pout_infinite}) at $\delta_{\rm opt}=1\,\forall\, \widetilde{\bar{\gamma}}$. Observe that in Fig. \ref{fig:Outage_Rayleigh_finite_battery_vs_SNR}, the outage probability curves for the different energy buffer sizes are parallel. This agrees with our asymptotic analysis in Section \ref{ss:Outage_probability_analysis}, which shows that diversity order is not affected by an energy storage with finite capacity. %The closed-form results agree perfectly with the simulated results. This emphasizes the tightness of the approximate pdf provided in Proposition  \ref{prop:limiting_dist_Finite_exponential_approx}. % $K_{\rm B}=1.38\times10^{-23}\,$J/K %The optimal $\tilde{\delta}$ can be obtained with the aid of the performance analysis, c.f. Section \ref{ss:Outage_probability_analysis}. 
\begin{figure}[!tp]
\centering
\includegraphics[width=0.4\textwidth]{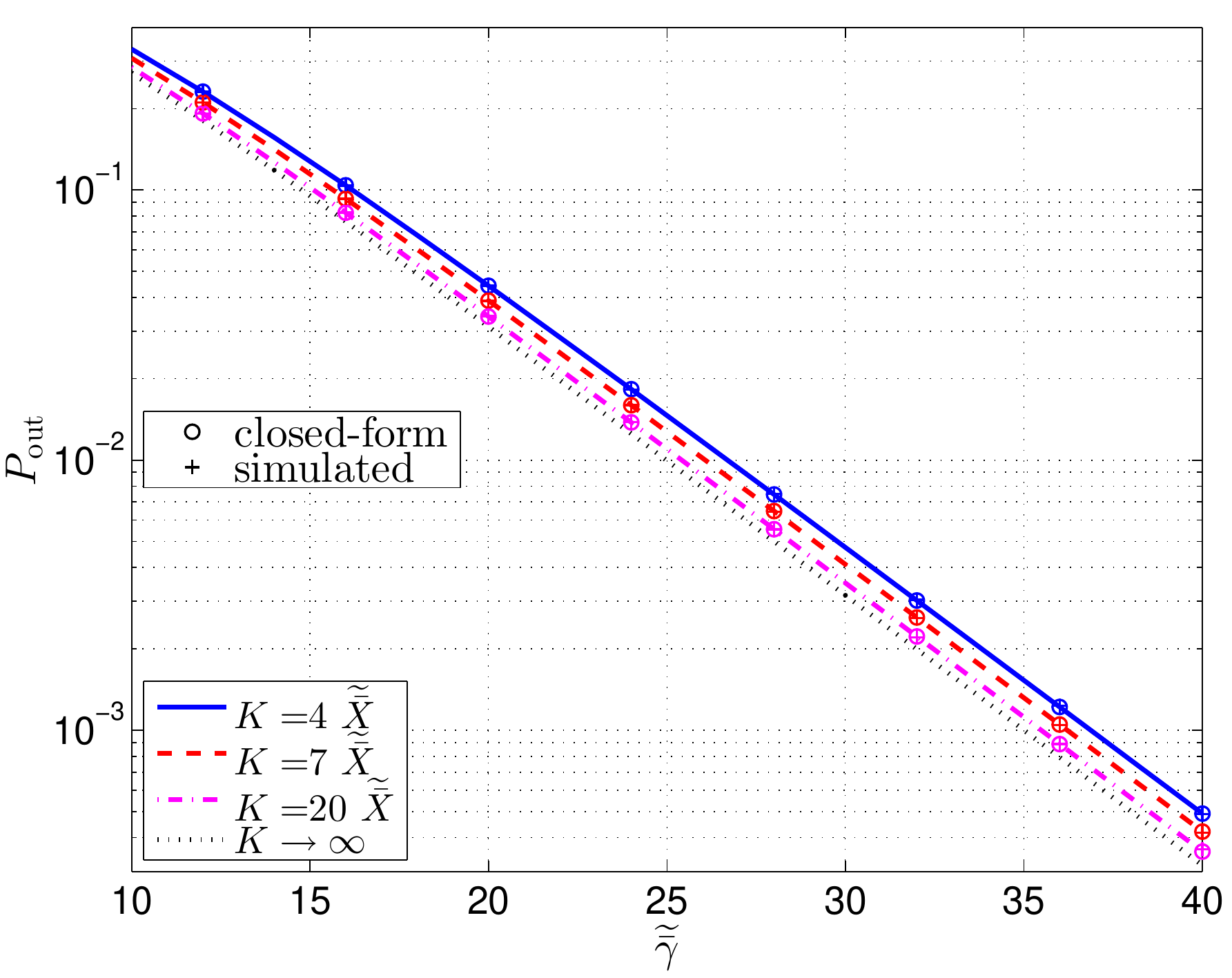}\vspace{-0.3cm}
\caption{Outage probability for different buffer sizes and different SNRs.}
\label{fig:Outage_Rayleigh_finite_battery_vs_SNR}
\end{figure}%\vspace{-0.2cm}

%%%%%%%%%%%%%%%%%%%%%%%%%%%%%%%%%%%%%%%%%%%%%%%%%%%%%%%%%%%%%%%%%%%%%%%%%%%%%%%%%%%%%%%%%%%%%%%%%%%%%%%%%%%%%%%%%%
\section{Conclusion}
\label{s:conclusion}
In this paper, we considered a simple online energy neutral transmission policy for an EH node, with finite/infinite energy storage. Using the theory of discrete-time Markov chains on a general state space, we analyzed the limiting distribution of the stored energy in the buffer for a general i.i.d. EH process and obtained it in closed form for an exponential EH process. An exponential-type approximation of the stored content distribution is proposed for finite-size buffers and shown to be tight. Our results reveal that the diversity orders of the AER and the outage probability are not affected by a finite energy storage capacity. Furthermore, for the considered transmission scheme, it was shown that the optimal desired transmit power of the EH node is always less than the average harvested power and increases with the storage capacity but decreases with the SNR.\vspace{-0.1cm} %Future work may include the use of a finite-size information buffer, consideration of different power consumption modes of the EH node, and an extension to a network of EH nodes with different EH opportunities and different assigned transmission load.Imperfections of the energy buffer and the circuit power consumption were also considered.
%%%%%%%%%%%%%%%%%%%%%%%%%%%%%%%%%%%%%%%%%%%%%%%%%%%%%%%%%%%%%%%%%%%%%%%%%%%%%%%%%%%%%%%%%%%%%%%%%%%%%%%%%%%%%%%%%%
% if have a single appendix:
%\appendix[Proof of the Zonklar Equations]
% or
\appendices  % for no appendix heading
% do not use \section anymore after \appendix, only \section*
% is possibly needed
%\begin{appendices}
%%%%%%%Appendices
\section{Proof of Theorem \ref{theo:no_stationary_dist}}
\label{app:no_stationary_dist} \vspace{-0.1cm}
Setting $K\to\infty$ and taking the expectation of both sides of (\ref{eq:general_storage_equation}), we obtain \vspace{-0.2cm}
\begin{equation} 
\E[B(i+1)]-\E[B(i)]=\bar{X}-\E[\Pul(i)].\vspace{-0.2cm}
\label{eq:expection_storage_equation}
\end{equation}
From (\ref{eq:Pul_policy}), $\Pul(i) \leq M, \, \forall \, i \Rightarrow \E[\Pul(i)] \leq M$, hence from (\ref{eq:expection_storage_equation}), $\E[B(i+1)]-\E[B(i)]\geq \bar{X}\!-\!M$ follows.
%\begin{equation}
%\E[B(i+1)]-\E[B(i)]\geq \bar{X}-M.
%\label{eq:expection_storage_equation2}
%\end{equation}
If $M\!<\!\bar{X}$, then \vspace{-0.2cm}
\begin{equation}
\E[B(i+1)]>\E[B(i)]\vspace{-0.2cm}
\label{eq:accumulating_energy}
\end{equation}
must hold. That is, the mean of the process $\{B(i)\}$ changes (increases) with time, and therefore a stationary distribution for $\{B(i)\}$ does not exist. Furthermore, from (\ref{eq:accumulating_energy}), $\lim_{i\to \infty} \E[B(i)]=\infty$, i.e., the energy accumulates in the buffer. Hence, there must be some time slot $j$, after which for $i>j$, $B(i)> M$ a.s. Next, we prove by contradiction that $j$ must be finite. If $\Pul(j)=B(j)<M$ and $j\to \infty$, then $\lim\limits_{j\to\infty} \E[B(j)]<M$ which violates $\lim\limits_{i\to \infty} \E[B(i)]=\infty$. Hence, $j$ must be finite. This completes the proof.\vspace{-0.1cm}
%%%%%%%%%%%%%%%%%%%%%%%%%%%%%%%%%%%%%%%%%%%%%%%%%%%%
%\section{Proof of Lemma \ref{lemma:avg_Pul_Xbar}}
%\label{app:avg_Pul_Xbar} 
%From the law of conservation of energy flow in the buffer, $\E[\Pul(i)]\leq \bar{X}$ must hold. Hence, from (\ref{eq:expection_storage_equation}), we obtain 
%\begin{equation}
%\E[B(i+1)]-\E[B(i)]\geq 0
%\label{eq:law_conservation_flow}
%\end{equation}
%As mentioned in Appendix \ref{app:no_stationary_dist}, a stationary distribution may only exist if (\ref{eq:law_conservation_flow}) holds with equality. In this case $\E[\Pul(i)]=\bar{X}$ from (\ref{eq:expection_storage_equation}). This completes the proof.
%%%%%%%%%%%%%%%%%%%%%%%%%%%%%%%%%%%%%%%%%%%%%%%%%%%%%%%%%%%%%%%%%%%%%%%%%%%%%%%%%%%%%%%%%%%%%%%%%%%%%%%%%%%%%%%%%%%
\section{Proof of theorem \ref{theo:stationary_dist_infinite}}
\label{app:stationary_dist_infinite}%\vspace{-0.1cm}
From Remark \ref{remark:equivalence_to_Morans_Model}, it can be observed that Moran's process $\{Z(i)\}$ in (\ref{eq:Moran_buffer_content}) is equivalent to the waiting time of a customer in a GI/G/1 queue \cite{asmussen2003applied}, where $X(i)$ is equivalent to the customer service time and $M$ is equivalent to the customers' inter-arrival time. Now, our storage process $\{B(i)\}$ in (\ref{eq:general_storage_equation}) with $K\!\to\!\infty$ is equivalent to the process $U(i)=Z(i)+X(i)$, see (\ref{eq:U_i_process}). That is, $\{B(i)\}$ is equivalent to the sojourn time (waiting time plus service time) of a customer in a GI/G/1 queue. Since $\{Z(i)\}$ and $\{X(i)\}$ are independent and $\{X(i)\}$ is stationary, then the steady state behavior of $\{B(i)\}$ is solely governed by that of $\{Z(i)\}$. Hence, from \cite[Corollary 6.5 and Corollary 6.6]{asmussen2003applied}, $M>\bar{X}$ is a sufficient condition for the process $\{B(i)\}$ to possess a unique stationary distribution to which it converges in total variation from any initial distribution.%This completes the proof. %(waiting time plus service time) of a customer in a GI/G/1 queue.
%Comparing the GI/G/1 queue in \cite{asmussen2003applied} with our storage process, it can be observed that the process $B(i+1)=[B(i)-M]^+$ is equivalent to the waiting time of a customer in the queue, while the process $B(i+1)\!=\![B(i)-M]^+\!+\!X(i)$ in (\ref{eq:general_storage_equation}) with $K\!\to\!\infty$ is equivalent to the sojourn time (waiting time plus service time) of a customer. Since $\{[B(i)-M]^+\}$ and $\{X(i)\}$ are independent and $\{X(i)\}$ is stationary, then the steady state behavior of the storage process in (\ref{eq:general_storage_equation}) is solely governed by that of the process $\{[B(i)-M]^+\}$. Hence, from \cite[Corollary 6.5 and Corollary 6.6]{asmussen2003applied}, $M>\bar{X}$ is a sufficient condition for the process $\{B(i)\}$ in (\ref{eq:general_storage_equation}) to possess a unique stationary distribution to which it converges in total variation from any initial distribution. This completes the proof.%\vspace{-0.1cm}%A similar result is given in \cite[proof of Theorem 1]{Sharma2010} for an infinite data buffer with continuous data flow.
%%%%%%%%%%%%%%%%%%%%%%%%%%%%%%%%%%%%%%%%%%%%%%%%%%%%%%%%%%%%%%%%%%%%%%%%%%%%%%%%%%%%%%%%%%%%%%%%%%%%%%%%%%%%%%%%%%
\section{Proof of Corollary \ref{theo:stationary_dist_infinite_exp}}
\label{app:stationary_dist_infinite_exp}
Substituting $f(x)\!=\!\lambda\e^{-\lambda x}$ in (\ref{eq:Integral_eqn_infinite}) and using $\delta\!=\!\lambda M$, we get\vspace{-0.1cm}
\begin{equation}
g(x)=\lambda\e^{-\lambda x}\left[\int\limits_0^M g(u) \dd u + \int\limits_{M}^{M+x}\e^{-\delta}\e^{\lambda u} g(u) \dd u\right].\vspace{-0.1cm}
\label{eq:integral_eqn_INF_battery_exp_input}
\end{equation}
When $M>\bar{X}$, i.e., $\delta>1$, we know from Theorem \ref{theo:stationary_dist_infinite} that (\ref{eq:integral_eqn_INF_battery_exp_input}) has a unique solution for $g(x)$. Similar to \cite[eq. (11)]{Infinite_dam_Gani_Prabhu_1957}, we postulate an exponential-type solution given by $g(x)=k\e^{px}$, then the right hand side of (\ref{eq:integral_eqn_INF_battery_exp_input}) reduces to\vspace{-0.1cm}
\begin{eqnarray}
&\lambda\e^{-\lambda x}\left[\frac{k}{p}\left(\e^{pM}-1\right)+\frac{k\e^{-\delta}}{\lambda+p}\left(\e^{(\lambda+p)(M+x)}-\e^{(\lambda+p)M}\right) \right]\notag\\
&\!=\!\e^{-\lambda x}\!\left[\frac{\lambda k}{p}\left(\e^{pM}\!-\!1\right)\!-\!\frac{\lambda k}{\lambda+p}\e^{pM}\right]\!+\!\frac{k \lambda\e^{pM} }{\lambda +p}\e^{px}\!\req k\e^{px}.
\label{eq:RHS_infinite_exp}
\end{eqnarray} %\req c \e^{dx}
In order for (\ref{eq:RHS_infinite_exp}) to hold, the coefficient of $\e^{px}$ in the second term of (\ref{eq:RHS_infinite_exp}) must be $k$, which implies $\lambda\e^{pM}\!=\!\lambda\!+\!p$. This condition will also reduce the coefficient of $\e^{-\lambda x}$ to zero. From $\lambda\e^{pM}\!=\!\lambda+p$, $p$ can be obtained using the Lambert W function, i.e., $p\!=\!\left(-\delta-W_{0}(-\delta\e^{-\delta})\right)/M$, which is $<0$ since $\delta>1$. Now, $k$ can be obtained from the unit area condition on $g(x)$, namely, $\int_0^\infty k\e^{px}\!\!=\!\!1 \Rightarrow k\!=\!-p$. This completes the proof.%\vspace{-0.1cm}
%We note that d that satisfies l+d=le^dm is +ve for delta<1. Hence pdf ce^dx can not be a pdf. This emphasizes the non-exitense of a stationary dist. for delta<1
%%%%%%%%%%%%%%%%%%%%%%%%%%%%%%%%%%%%%%%%%%%%%%%%%%%%%%%%%%%%%%%%%%%%%%%%%%%%%%%%%%%%%%%%%%%%%%%%%%%%%%%%%%%%%%%%%%
\section{Proof of theorem \ref{theo:limiting_dist_Finite}}
\label{app:limiting_dist_Finite} % random walk on a halpf life, p.107 soft. It is in the same section in 1993, 2005, 2009
Similar to the random walk process on a half line in \cite[Section 4.3.1]{Meyn_Tweedie}, if the distribution of the EH process $\{X(i)\}$ has an infinite positive tail, then the state space $S$ contains an atom at $K$, i.e., the energy level $B(i)\!=\!K$ is reachable with non-zero probability. Define the measure $\phi$ as $\phi(0,K)\!=\!0$ and $\phi(\{K\})\!=\!1$, then the process $\{B(i)\}$ is $\phi$-irreducible, see \cite[Section 4.2]{Meyn_Tweedie}. Furthermore, $\{B(i)\}$ is also $\psi$-irreducible with $\psi(A)=\sum_n \mathbb{P}^n(K,A)2^{-n}$, where $\mathbb{P}^n(x,A)$ is the probability that the Markov chain moves from energy state $x$ to energy set $A$ in $n$ time steps. The dynamics of $\{B(i)\}$ in (\ref{eq:general_storage_equation}) ensures that all energy sets are reachable a.s. from any initial state of the buffer in a finite mean time. Hence, the chain is positive Harris recurrent \cite[Proposition 9.1.1]{Meyn_Tweedie}, where \emph{positive} recurrence follows from \cite[Theorem 10.2.2]{Meyn_Tweedie}. Thus, $\{B(i)\}$ possesses a unique stationary distribution $\pi$. Finally, with the additional property of $\{B(i)\}$ being aperiodic (i.e., no energy level sets are only revisited after a fixed number of time slots $>1$ (period $>1$)), it follows from \cite[Theorem 13.3.3]{Meyn_Tweedie} that $\{B(i)\}$ converges to the distribution $\pi$ in total variation from any initial distribution $\Gamma$, i.e., $\lim\limits_{n\to\infty}\sup\limits_{A}|\int \Gamma(\dd x)\mathbb{P}^n(x,A)-\pi(A)|\to0$. This completes the proof.%\vspace{-0.1cm}% same thoerem number in all versions
% Note that the definition of aperiodicity is mentioned nicely in http://arxiv.org/pdf/math/0404033.pdf p.32
%Definition. A Markov chain with stationary distribution π(·) is aperiodic if there do not exist d ≥ 2 and disjoint subsets X1, X2, . . . , Xd ⊆ X with P(x, Xi+1) = 1 for all x ∈ Xi (1 ≤ i ≤ d − 1), and P(x, X1) = 1 for all x ∈ Xd, such that π(X1) > 0 (and hence π(Xi) > 0 for all i). (Otherwise, the chain is periodic, with period d, and periodic decomposition X1, . . . , Xd.)
%%%%%%%%%%%%%%%%%%%%%%%%%%%%%%%%%%%%%%%%%%%%%%%%%%%%%%%%%%%%%%%%%%%%%%%%%%%%%%%%%%%%%%%%%%%%%%%%%%%%%%%%%%%%%%%%%%%
%\section{Proof of theorem \ref{theo:limiting_dist_Finite_exponential_exact}}
%\label{app:limiting_dist_Finite_exponential_exact}
%%%%%%%%%%%%%%%%%%%%%%%%%%%%%%%%%%%%%%%%%%%%%%%%%%%%%%%%%%%%%%%%%%%%%%%%%%%%%%%%%%%%%%%%%%%%%%%%%%%%%%%%%%%%%%%%%%%
\section{Proof of Proposition \ref{prop:limiting_dist_Finite_exponential_approx}}
\label{app:limiting_dist_Finite_exponential_approx}
The exponential approximation is motivated by the exponential distribution of the buffer content for an infinite buffer size given in Corollary \ref{theo:stationary_dist_infinite_exp}. The reason why we approximate only part of $g(x)$ is that although the approximate pdf $\tilde{g}(x)$ is tight for most of the range of $x$ (even for $n_c=2$), it is loose at the tail of the distribution (namely for the last two sections of the pdf, i.e., $n=0,1$). With $\tilde{g}(x)=c\e^{d x}$, $d$ is obtained in exactly the same manner as $p$ for an infinite-size buffer, c.f. Appendix \ref{app:stationary_dist_infinite_exp}. However, unlike in the infinite-size buffer case,  the amount of energy in a finite-size buffer with $\delta \leq 1$ still convergences to a limiting distribution, c.f. Theorem \ref{theo:limiting_dist_Finite}. This explains the use of the Lambert W function with two different orders in (\ref{eq:d_approx}) to consider the two cases of $\delta \leq 1$ and $\delta>1$. Note that $d$ in (\ref{eq:d_approx}) satisfies $d>0$ for $\delta<1$ (an exponentially increasing $\tilde{g}(x)$), $d<0$ for $\delta>1$ (an exponentially decaying $\tilde{g}(x)$), and $d=0$ for $\delta=1$ (a nearly uniform distribution). Since $\tilde{g}(0)=c$, we obtain $c$ simply from the exact $g(x)$ in (\ref{eq:g_x_finite_exact}) at $x=0$ after replacing $\pi(K)$ by $\tilde{\pi}(K)$, i.e., $c=\frac{\tilde{\pi}(K)}{\pi(K)}g(0)$. Finally, $\tilde{\pi}(K)$ in (\ref{eq:Pi_K_approx}) guarantees a unit area distribution, i.e., $\int\limits_{0}^{K-n_c M} \tilde{g}(x) \dd x + \sum\limits_{n=0}^{n_c-1}\int\limits_{[K-(n+1)M]^+}^{K-nM}\tilde{g}_n(x) \dd x+\tilde{\pi}(K)=1$.
Next, we study the error associated with the proposed approximation. As far as the performance analysis is concerned, only the pdf in the range $[0,M]$ is needed, c.f. Section \ref{s:BER_outage_analysis}. Assuming $K\!=\!lM$, with $l\!\in\!\mathbb{Z}$ for simplicity, then the approximation error in the range $[0,M]$ is given by\vspace{-0.4cm}
\begin{equation}
\begin{aligned}
e(x)=&g_{l-1}(x)-\tilde{g}(x)=\pi(K)A(x)\Bigg[\sum_{q=0}^{l-1}S_q(x)-\\
&\sum_{q=0}^{l-1}S_q(0)\e^{-W_j(-\delta\e^{-\delta})\frac{x}{M}}\left[\frac{1+\sum_{n=0}^{l-1}C_n}{1+\Sigma_2+\sum_{n=0}^{n_c-1}C_n}\right]\Bigg],
\end{aligned}
\label{eq:error}
\end{equation} where $S_q(x)=\frac{(\delta\e^{-\delta})^q}{q!}\left(q+\frac{\lambda(x-K)}{\delta}\right)^{q-1}\left(q+\frac{\lambda(x-K)}{\delta}-\frac{q}{\delta}\right)$ is the summand in (\ref{eq:g_x_finite_exact}), $A(x)=\lambda\e^{-\lambda(x-K)}$ and $C_n\!=\!\e^{n\delta}\Big(\!\e^{\delta}\!-\!1\!+\!\sum_{q=1}^{n}\!\frac{\left(\delta\e^{-\delta}\right)^q}{q!}\!\left(\e^{\delta}\left(q\!-\!(n\!+\!1)\right)^q\!-\!(q\!-\!n)^q\!\right)\Big)$ is the summand in (\ref{eq:Pi_K_approx}). Using the asymptotic expansion of the exponential of the Lambert W function given by $\frac{\e^{-aW_j(-z)}}{1+W_j(-z)}=\sum_{n=0}^{\infty}(a+n)^n\frac{z^n}{n!}$, it can be shown that if $l\to\infty$, $\sum_{q=0}^{\infty}S_q(x)=\sum_{q=0}^{\infty}S_q(0)\e^{-W_j(-\delta\e^{-\delta})\frac{x}{M}}$ and the approximation error tends to zero. For example, using (\ref{eq:error}) at $n_c=2$, the maximum error percentage $e(x)/g(x)$ in the range of $x=[0,M]$ with $\delta\geq 0.5$ is less than 8.3\% for $K=3M$ and 1.4\% for $K=4M$.%\vspace{-0.1cm}  for a relatively small buffer size of 
% use appendices with more than one appendix
% then use \section to start each appendix
% you must declare a \section before using any
% \subsection or using \label (\appendices by itself
% starts a section numbered zero.)

%\end{appendices}
%

%\bibliographystyle{IEEEtran}
%% argument is your BibTeX string definitions and bibliography database(s)
%%\bibliography{IEEEabrv,../bib/paper}
%\bibliography{references}

%%%%%%%%%%%%%%%%%%%%%%%%%%%%%%%%%%%%%%%%%%%%%%%%%%%%%%%%%%%%%%%%%%%%%%%%%%%%%%%%%%%%%%%%%%%%%%%%%%%%%%%%%%%%%%%%%%
\bibliographystyle{IEEEtran}
\bibliography{references}
\end{document}